\def\maxwidth{\ifdim\Gin@nat@width>\linewidth\linewidth\else\Gin@nat@width\fi}
\def\maxheight{\ifdim\Gin@nat@height>\textheight\textheight\else\Gin@nat@height\fi}
\def\fps@figure{htbp}
\newtheorem{theorem}{Theorem}
\newtheorem{lemma}{Lemma}
\newcommand{\blandscape}{\begin{landscape}}
\newcommand{\elandscape}{\end{landscape}}
\title{A non-convex regularization approach for stable estimation of loss
development factors}
\author{Himchan Jeong\footnote{himchan\_\href{mailto:jeong@sfu.ca}{\nolinkurl{jeong@sfu.ca}};
  Corresponding author, Department of Statistics \& Actuarial Science,
  Simon Fraser University, 8888 University Drive, Burnaby, BC, V5A 1S6,
  Canada.} \and Hyunwoong Chang\footnote{\href{mailto:hwchang1201@tamu.edu}{\nolinkurl{hwchang1201@tamu.edu}};
  Department of Statistics, Texas A\&M University, College Station, TX,
  77843-3143, USA.} \and Emiliano A. Valdez\footnote{\href{mailto:emiliano.valdez@uconn.edu}{\nolinkurl{emiliano.valdez@uconn.edu}};
  Department of Mathematics, University of Connecticut, 341 Mansfield
  Road, Storrs, CT, 06269-1009, USA.}}
\date{}
\begin{document}
\maketitle

\hypertarget{abstract}{%
\subsection*{Abstract}\label{abstract}}
\addcontentsline{toc}{subsection}{Abstract}

In this article, we apply non-convex regularization methods in order
{to obtain} stable estimation of loss development factors
{in insurance claims reserving}. Among the non-convex regularization
methods, we focus on the use of {the} log-adjusted absolute deviation
(LAAD) penalty and provide discussion on optimization of LAAD penalized
{regression} model, which {we prove} to converge with a coordinate
descent algorithm under mild conditions. This has the advantage of
obtaining a consistent estimator for the regression coefficients
{while allowing for} the variable selection, which is linked to the
stable estimation of loss development factors. We calibrate our proposed
model using a multi-line insurance dataset from a property and casualty
insurer where we observed reported aggregate loss along accident years
and development periods.
{When compared to other regression models, our LAAD penalized regression model provides very promising results.}

\vspace{1cm}

\textbf{Keywords:} Insurance reserving, log-adjusted absolute deviation
(LAAD) penalty, loss development, non-convex penalization, robust
estimation, variable selection.

\vspace{6cm}

\pagebreak

\section{Introduction} \label{sec:intro}

{The} chain ladder method, as an industry benchmark with theoretical
{foundation as discussed in} \citet{mack1993chainladder} and
\citet{mack1999chainladderse}, has been widely used to determine the
development pattern of reported or paid claims. {Despite its}
prevalence, we need to {address} some {relevant} issues in
{the} estimation of development factors for mature years with
{this} chain ladder method. In general, we expect that cumulative
reported loss amount {increases through time} whereas the magnitude
of {the} development decreases. However, it is possible that loss
development patterns with some run-off triangles may not follow
{these expected} patterns
{since we have only a few data points in the top-right} corner and
estimation of parameters {that} depends on those data points becomes
unstable, due to triangular or trapezoidal shape of aggregated claim
data {as mentioned in} \citet{renshaw1989chain}. Therefore, we need
to consider {some ways} to estimate the development factors for
mature years with more stability.

To deal with the aforementioned issue of stable estimation of
development factors for mature years, one can apply
{the regularization} method or penalized regression in loss
development models. Today, there is a rich literature of using
penalization
\footnote{{In this article, we have interchangeably used the terms penalization and regularization; both are standard terms in the literature.}}
in {the} regression framework. The first penalization method
introduced is ridge regression, developed by \citet{hoerl1970ridge}. By
adding an \(L_2\) penalty term on the least squares, they showed that it
is possible to have smaller mean squared error when there is severe
multicollinearity in a given dataset. Nevertheless, ridge regression has
merely the shrinkage property {but} not the property of variable
selection. To tackle this problem, \citet{tibshirani1996lasso} suggested
LASSO {(least absolute shrinkage and selection operator)} using an
\(L_1\) penalty term on the least squares, and showed that this enables
us to {perform} variable selection. This method leads to dimension
reduction as well. Despite the simplicity of the proposed method, there
has been a great deal of work done to extend the LASSO framework. For
example, \citet{park2008bayesian} {solidified} Tibshirani's work by
providing a Bayesian interpretation on LASSO. Although LASSO has the
variable selection property, the estimates derived from LASSO regression
are inherently biased
{and may over-shrink the retained variables \mbox{\citep{hastie2009elements, hastie2015sparsity}}}.

There have been some meaningful approaches so that we obtain both
variable selection and consistency of estimates. For example,
\citet{fan2001scad} proposed smoothly clipped absolute deviation (SCAD)
penalty, derived by assuming continuously differentiable penalty
function to achieve three properties such as (i) consistency of the
estimate as the true value of the parameter increases, (ii) variable
selection, and (iii) the continuity of the calculated estimates.
Although SCAD penalty has the above-mentioned properties, it naturally
{leads to} a non-convex optimization so that it loses desirable
properties of convex optimization problems. \citet{zhang2010mcp}
proposed minimax concave penalty (MCP), which minimizes the maximum
concavity subject to {an} unbiasedness feature. Moreover,
\citet{lee2010hal} and \citet{armagan2013gdp} proposed {the}
log-adjusted absolute deviation (LAAD) penalty, which is derived by
{imposing a} hyperprior {distribution} on the tuning parameter
\(\lambda\), as proposed in \citet{park2008bayesian}.
{The term ``LAAD penalty'' has been coined in this paper.}

The {concept} of penalized regression in the actuarial literature is
not quite new.
{Indeed, the well-known Whittaker-Henderson mortality graduation method introduced a penalty to balance fit and smoothness of the observed data. See \mbox{\citet{chan1982wh}}.}
Yet there has only been a few relevant other works in the field.
\citet{williams2015elastic} applied {the} elastic net penalty, which
is a combination of \(L_1\) and \(L_2\) penalties, on a dataset with
over 350 initial covariates to enhance insurance claims prediction.
\citet{nawar2016} used LASSO for detecting possible interaction between
covariates, which are used in claims modeling. \citet{yin2016efficient}
{used a version of non-convex penalty for efficient estimation of Erlang mixture model with a group medical insurance claims data.}
Recently, \cite{mcguire2018lasso} proposed the use of \(L_1\)
penalization for stable estimation of loss development factors.

In this paper, we explore the use of LAAD {regression} that enables
us to {implement} variable selection
{and to obtain smoothness and stability} while maintaining
consistency of the estimators. {These properties inspired} us to
apply LAAD penalty in {a} cross-classified regression model
{to estimate} loss development in claims reserving.
{We also provide theoretical discussions on the convergence of LAAD penalized models based on the coordinate descent algorithm, and describe sufficient conditions for convergence, which are well-satisified in the applications for claims reserving. In calibrating various competing models, we use data drawn from a property and casualty (P\&C) insurer with two lines of business; the claims reported are conveniently expressed as loss triangles. We compare the estimated loss development factors of our proposed model with a straightforward cross-classified model and cross-classified models with LASSO and non-convex penalties, including LAAD. All these models are comparably explained in the section on estimation and prediction; we also discuss the validation measures for comparison purposes. Among these various models, our LAAD regression model performed remarkably well, which produced estimates of loss development factors with reasonable patterns and consistent with a priori knowledge. It also provides more improved prediction of reserve estimates. We additionally examined its usefulness in insurance ratemaking and this is summarized in the appendix.}

This paper has been organized as follows. In Section
\ref{sec:blassolaad}, we introduce the construction of the LAAD penalty
and provide sufficient condition {for} the convergence of
{the coordinate descent algorithm that forms the foundation to}
calibrate a LAAD regression model. In Section \ref{sec:simul}, we show
the efficacy of the proposed method {using a} simulation study. In
Section \ref{sec:insurance}, we explore possible use of LAAD penalty in
insurance
{claims reserving using an insurer's dataset with multi-line reported loss triangles. We discuss the results of the estimation and evaluate predictions for the various models examined}.
We {conclude} in Section \ref{sec:conclude}.

\section{LAAD penalization model and optimization} \label{sec:blassolaad}

\subsection{Derivation and properties of LAAD penalty} \label{sub:bayeslasso}

According to \citet{park2008bayesian}, we may interpret LASSO in a
Bayesian framework as follows: \[
\begin{aligned}
&Y|\beta \sim N(X\beta,\sigma^2 I_n), \quad \beta_i | \lambda \sim \text{Laplace}(0,1/\lambda),
\end{aligned}
\] where \(\beta\) is a vector of size \(p\) with each component having
a density function
\(p(\beta_i|\lambda)=\frac{\lambda}{2}e^{-\lambda{|\beta_i|}}\), for
\(\beta_i \in \mathbb{R}\). According to their specification, we may
express the likelihood and the log-likelihood for \(\beta\),
respectively, as

\begin{equation}\label{eq:1}
\begin{aligned}[b]
&L(\beta|y,X,\lambda) \  \propto \ \exp \left(-\frac{1}{2\sigma^2}\left[\sum_{i=1}^n  (y_i-X_i\beta)^2 \right]-\lambda||\beta||_1 \right) \ \text{and} \\
&\ell(\beta|y,X,\lambda) = -\frac{1}{2\sigma^2}\left[\sum_{i=1}^n  (y_i-X_i\beta)^2\right]-\lambda||\beta||_1+\text{Constant} .
\end{aligned}
\end{equation}

\citet{park2008bayesian} suggested two {approaches} to choose the
optimal \(\lambda\) in Equation (\ref{eq:1}). One is the use of point
estimate by cross-validation, and the other is the use of a `hyperprior'
distribution for \(\lambda\).

Now, consider the following distributional assumptions \[
\begin{aligned}
&Y|\beta \sim N(X\beta,\sigma^2 I_n), \quad \beta_j | \lambda_j \sim \text{Laplace}(0,1 / \lambda_j), \quad \text{and} \quad \lambda_j|r \overset{i.i.d.}{\sim} \text{Gamma}(r/\sigma^2-1,1).
\end{aligned}
\] In other words, the hyperprior of \(\lambda_j\) follows a gamma
distribution with density
\(p(\lambda_j|r)=\lambda_j^{\frac{r}{\sigma^2}-2}e^{-\lambda_j}/\Gamma(\frac{r}{\sigma^2}-1)\).
This implies that we have: \begin{equation}\label{eq:2}
\begin{aligned}[b]
&L(\beta,\lambda_1,\ldots,\lambda_p|y,X,r) \  \propto \ \exp \left(-\frac{1}{2\sigma^2}[\sum_{i=1}^n  (y_i-X_i\beta)^2] \right)\times \prod_{j=1}^{p}\exp \left( -\lambda_j\left[|\beta_j|+1\right] \right)\lambda_j^{\frac{r}{\sigma^2}-1}, \\
&L(\beta|y,X,r)=\int L(\beta,\lambda|y,X,r)d\lambda \  \propto \ \exp \left(-\frac{1}{2\sigma^2}[\sum_{i=1}^n  (y_i-X_i\beta)^2] \right)\times \prod_{j=1}^{p} \left( 1+|\beta_j| \right)^{-\frac{r}{\sigma^2}}, \\
&\ell(\beta|y,X,r)= -\frac{1}{2\sigma^2} \left(\sum_{i=1}^n  (y_i-X_i\beta)^2 +2r \sum_{j=1}^{p}\log(1+|\beta_j|)\right) + \text{Constant}.
\end{aligned}
\end{equation}

As a result, the log-likehood in Equation (\ref{eq:2}) allows us to have
the following formulation of our penalized least squares problem. This
gives rise to what we call the log-adjusted absolute deviation (LAAD)
penalty function: \[
||\beta||_L=\sum_{j=1}^p \log (1+|\beta_j|),
\] so that \[
\begin{aligned}
\hat{\beta} &=\ \mathop\mathrm{argmin}_{\beta} \frac{1}{2}||y-X\beta ||^2+r\sum_{j=1}^{p} \log(1+|\beta_j|).
\end{aligned}
\] Such derivation of LAAD penalty has been used in the statistical
literature including an application on non-convex paths
\citep{mazumder2011sparsenet} and health care \citep{wang2019novel}.

To {further understand} the characteristics of a model with LAAD
penalty, consider a simple example when \(p=1\) and \(||X||=1\). In this
case, optimization of \(\ell\) in Equation (\ref{eq:2}) is reduced to a
univariate case so that it is enough to solve the following:
\begin{equation}\label{eq:3}
\hat{\theta}_j=\mathop\mathrm{argmin}_{\theta_j} \frac{1}{2}(z_j-\theta_j)^2+r\log(1+|\theta_j|),
\end{equation} where \(z = X'y\).

\begin{theorem} \label{thm:1}
Let us set $\ell(\theta|r,z)=\frac{1}{2}(z-\theta)^2+r\log(1+|\theta|)$. Then the corresponding minimizer will be given as
$\hat{\theta}=\theta^*  \cdot \mathbbm{1}_{ \{|z|\geq z^*(r) \vee r \}}$, where
$$
\theta^* = \frac{1}{2} \left[z+ \text{sgn}(z)\left(\sqrt{(|z|-1)^2+4|z|-4r}-1\right) \right],
$$
and $z^*(r)$ is the unique solution of
$$
\Delta(z|r) = \frac{1}{2}(\theta^{*})^2-\theta^{*}z+r\log(1+|\theta^{*}|)=0.
$$
\end{theorem}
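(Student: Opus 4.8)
The plan is to treat this as a one-dimensional hard-thresholding problem and to reduce it to comparing the objective at its interior stationary point against the boundary value at $\theta = 0$. First I would exploit the symmetry of $\ell$: since $\ell(\theta \mid r,z) = \ell(-\theta \mid r,-z)$, it suffices to treat $z \ge 0$ and recover the general case through the factor $\text{sgn}(z)$; the case $z = 0$ is immediate because $\ell$ is then minimized at $\theta = 0$. For $z > 0$ I would first argue that the minimizer cannot be negative: for $\theta < 0$ both $\tfrac12(z-\theta)^2 > \tfrac12 z^2$ and $r\log(1+|\theta|) > 0$, so every such $\theta$ is strictly dominated by $\theta = 0$. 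This justifies restricting attention to $\theta \ge 0$, where $|\theta| = \theta$ and the penalty is smooth.

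On $\theta > 0$ the objective is differentiable, and setting $\ell'(\theta) = (\theta - z) + r/(1+\theta) = 0$ yields the quadratic $\theta^2 + (1-z)\theta + (r-z) = 0$ with discriminant $(z+1)^2 - 4r$, which I would check equals $(|z|-1)^2 + 4|z| - 4r$ so that its larger root is exactly the stated $\theta^*$. The next step is to classify the stationary points using the sum $z-1$ and product $r-z$ of the roots together with $\ell'(0^+) = r - z$: when $z > r$ there is a single positive root at which $\ell$ attains an interior minimum, whereas for $z \le r$ a positive local minimum exists only once the discriminant is nonnegative and both roots are positive. This classification identifies the region of $(z,r)$ in which a genuine nonzero candidate $\theta^*$ is available, and it is the source of the threshold tied to $r$.

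Since $\ell$ is coercive and its only competitors are the boundary point $\theta = 0$ and the interior local minimum $\theta^*$, the global minimizer is whichever of these gives the smaller value. Using $\ell(0) = \tfrac12 z^2$, the gap is exactly $\ell(\theta^*) - \ell(0) = \tfrac12(\theta^*)^2 - \theta^* z + r\log(1+|\theta^*|) = \Delta(z \mid r)$, so the minimizer is $\theta^*$ precisely when $\Delta(z \mid r) \le 0$ and is $0$ otherwise. To show $z^*(r)$ is well-defined I would prove $\Delta$ is strictly monotone in $z$: by the envelope theorem, since $\theta^*$ is stationary for $\ell$, $\tfrac{d}{dz}\ell(\theta^*(z),z) = z - \theta^*$ while $\tfrac{d}{dz}\ell(0,z) = z$, so $\Delta'(z) = -\theta^* < 0$. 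Strict decrease, together with the limiting signs of $\Delta$ at the edge of the existence region and as $z \to \infty$, gives a unique root $z^*(r)$ and the hard-thresholding form $\hat\theta = \theta^*\,\mathbbm{1}_{\{|z| \ge z^*(r) \vee r\}}$.

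The main obstacle I anticipate is the final assembly: reconciling the two thresholds — the existence threshold for a positive stationary point (governed by $r$ through the sign of $\ell'(0^+)$ and the discriminant) and the energy-comparison threshold $z^*(r)$ coming from $\Delta \le 0$ — into the single indicator set. Because the penalty is non-convex, the transition from $\hat\theta = 0$ to $\hat\theta = \theta^*$ is a hard threshold that may be discontinuous, so I would need to verify carefully which of the two conditions binds in each regime of $r$ and confirm that their combination is captured by $|z| \ge z^*(r) \vee r$; checking the boundary behavior of $\Delta$ at $z = r$, where $\theta^*$ either vanishes or is strictly positive depending on whether $r \le 1$, is the delicate point that pins down the stated threshold.
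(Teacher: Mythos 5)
Your proposal is correct and follows essentially the same route as the paper's proof: symmetry reduction to $z \ge 0$, restriction to $\theta \ge 0$, the quadratic $\theta^2+(1-z)\theta+(r-z)=0$ for the stationary points, comparison of $\ell(\theta^*)$ against $\ell(0)$ via $\Delta(z|r)$, and the envelope-theorem computation $\Delta'(z|r)=-\theta^*$ with boundary sign checks to get uniqueness of $z^*(r)$. The ``delicate final assembly'' you flag is exactly the paper's Case 4 ($1 \le z < r < ((z+1)/2)^2$, which can only occur when $r>1$), and it is resolved precisely by the boundary checks you propose: $\Delta>0$ at $z=2\sqrt{r}-1$ and $\Delta<0$ at $z=r$, so the energy threshold $z^*(r)$ binds only in that regime, while for $r\le 1$ the stationary point degenerates to $0$ at $z=r$ and the threshold is $r$ itself.
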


\begin{proof}
See Appendix A.
\end{proof}

Note that
\(\theta^{*}= \frac{z}{2} + \text{sgn}(z)\left[\frac{\sqrt{(|z|+1)^2-4r}-1}{2}\right] \simeq \frac{z}{2} + \text{sgn}(z)\left[\frac{(|z|+1)-1}{2}\right]=z\)
when \(|z|\) is large enough, which means \(\theta^{*}\) converges to
\(z\) when \(|z| \rightarrow \infty\). Therefore, by using LAAD penalty,
we obtain an optimizer which has {properties of} variable selection,
consistency, and continuity as demonstrated in \citet{armagan2013gdp}.

Figure \ref{fig:1} provides graphs that describe the behavior of the
obtained optimizer derived with different penalization. The first graph
is the behavior of the optimizer derived with \(L_2\) penalty, which is
also called ridge regression. In this case, as previously alluded, it
has no variable selection property but it only shrinks the magnitude of
the estimates. The second graph is the behavior of the optimizer derived
with \(L_1\) penalty, which is the basic LASSO. In this case, we see
that although it has {the} variable selection property (if value of
\(\beta\) is small enough, then \(\hat{\beta}\) becomes 0), the
discrepancy between the true \(\beta\) and \(\hat{\beta}\) remains
constant even when the true \(|\beta|\) is very big. Finally, the third
graph shows the behavior of the optimizer derived with the proposed LAAD
penalty. One can see that not only {does} the given optimizer have
the variable selection property, but also \(\hat{\beta}\) converges to
\(\beta\) as \(|\beta|\) increases.

\begin{figure}
\centering
\includegraphics{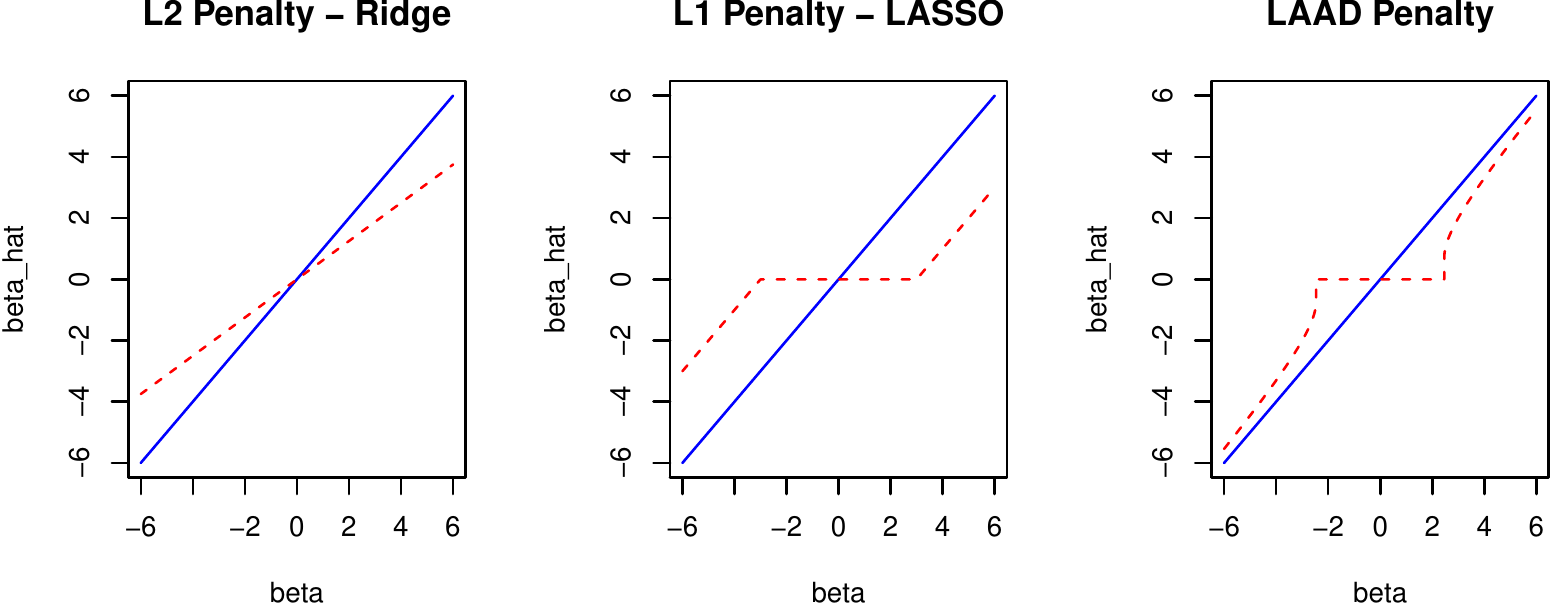}
\caption{\label{fig:1}Estimate behavior for several penalty functions}
\end{figure}

Figure \ref{fig:2} illustrates the constraint regions implied by each
penalty. It is well known that the constraint regions defined by \(L_2\)
penalization is a \(p\)-dimensional circle, whereas the constraint
regions defined by \(L_1\) penalization is a \(p\)-dimensional diamond.
We observe that in both cases of \(L_2\) and \(L_1\) penalization, the
constraint regions are convex, which implies {that} we entertain the
properties of convex optimization. In the case of the constraint implied
by LAAD penalty, the region is non-convex, which is inevitable to obtain
both consistency of the estimates and variable selection property
{as in the case of Bridge or $L_q$ penalization with $q \in (0,1)$. However, it is known that there is no closed form solution in the univariate case of $L_q$ penalized linear regression when $q \in (0,1)$ \mbox{\citep{knight2000asymptotics}} so that it is difficult to apply $L_q$ penalty $(0<q<1)$ on high-dimensional problems with the coordinate descent algorithm, let alone analyze convergence.}

\begin{figure}
\centering
\includegraphics{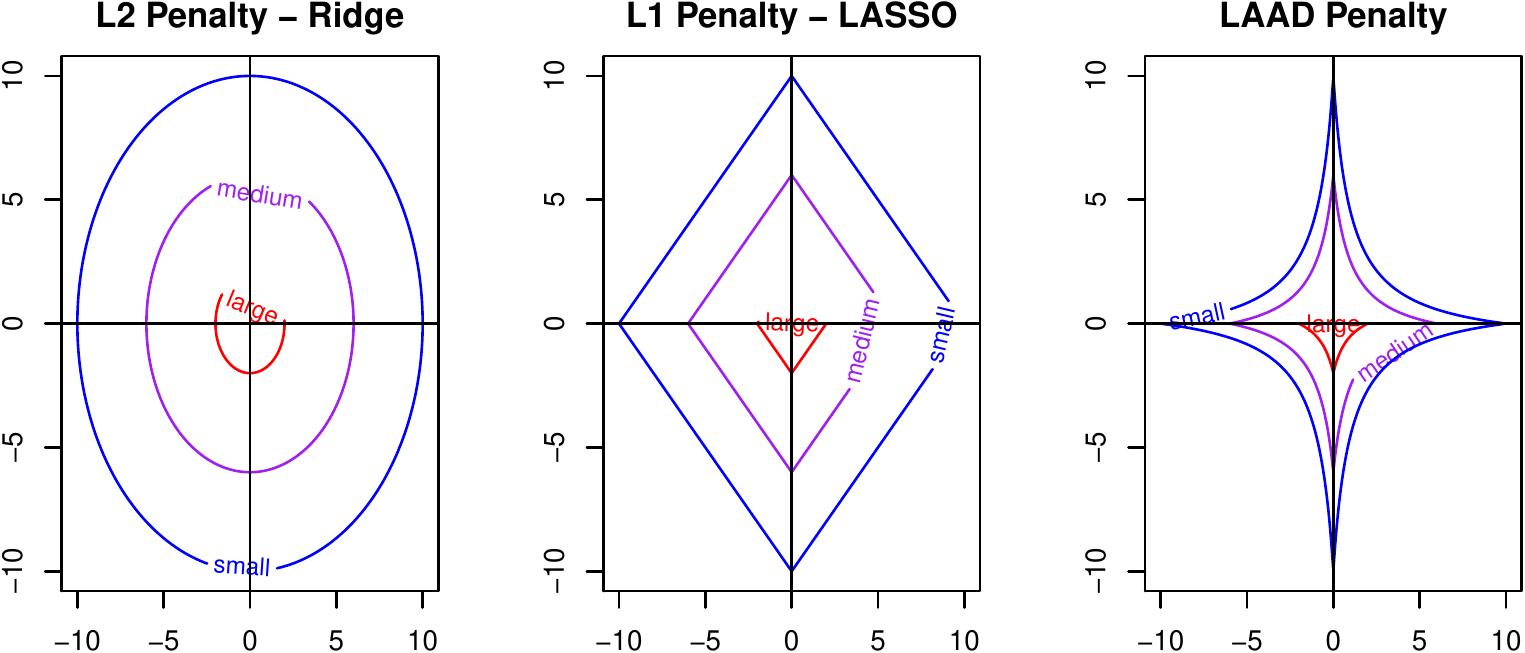}
\caption{\label{fig:2}Constraint regions for several penalties}
\end{figure}

It is also possible to compare the behavior of LAAD penalty with SCAD
penalty and MCP. According to \citet{fan2001scad} and
\citet{zhang2010mcp}, one can write down the penalty functions (and
their derivatives) in the univariate case, assuming \(\theta \geq 0\),
as follows: \begin{equation} \label{eq:penalty}
\begin{aligned}
& p_{LASSO}(\theta;\lambda) = \lambda \theta, \quad p'_{LASSO}(\theta;\lambda) = \lambda,  \\
& p_{MCP}(\theta;\lambda,\gamma) = \int_0^\theta \lambda (1-x/\gamma \lambda )_+dx, \quad p'_{MCP}(\theta;\lambda,\gamma) = \lambda (1-\theta/\gamma \lambda )_+, \\
& p'_{SCAD}(\theta;\lambda,a) = \lambda \{ I(\theta \leq \lambda) + \frac{(a\lambda-\theta)_+}{(a-1)\lambda}I(\theta > \lambda)  \}, \ \text{and} \\
& p_{LAAD}(\theta;\lambda) = \lambda \log (1+\theta), \quad p'_{LAAD}(\theta;\lambda) = \lambda \left( \frac{1}{1+\theta} \right).
\end{aligned}
\end{equation}

From above, it is straightforward to see that \[
\lim_{\theta \rightarrow \infty} p'_{SCAD}(\theta;\lambda,a)=\lim_{\theta \rightarrow \infty}p'_{MCP}(\theta;\lambda,\gamma)=\lim_{\theta \rightarrow \infty}p'_{LAAD}(\theta;\lambda)=0.
\] This implies that the marginal effect of penalty converges to 0 as
the value of \(\theta\) increases and hence, the magnitude of distortion
on the estimate becomes negligible as the true coefficient gets larger
when we use either SCAD, MCP, or LAAD penalty. However, we see that
\(\lim_{\theta \rightarrow \infty} p'_{LASSO}(\theta;\lambda) = \lambda\),
which means that the magnitude of distortion on the estimate is the same
even in the case when the true coefficient is very large.

On the other hand, if we let \(\theta\) {go} to 0, one can see that
\[
\lim_{\theta \rightarrow 0+} p'_{SCAD}(\theta;\lambda,a)=\lim_{\theta \rightarrow 0+}p'_{MCP}(\theta;\lambda,\gamma)=\lim_{\theta \rightarrow 0+}p'_{LAAD}(\theta;\lambda)=\lim_{\theta \rightarrow 0+}p'_{LASSO}(\theta;\lambda)=\lambda,
\] which implies that for SCAD, MCP, and LAAD penalty, the magnitude of
penalization is the same as LASSO when the true value of \(\theta\) is
very small. Therefore, we verify that SCAD, MCP, and LAAD penalties have
the same property of variable selection as LASSO, when the true
\(\theta\) is small enough.

{To summarize, it can be deduced that the use of LAAD penalty rewards sparsity more so than LASSO penalty. Although the overall degree of penalization is determined by cross-validation in both cases, penalization is evenly applied on all coefficients in the case of LASSO, while penalization is more severe for smaller coefficients in the case of LAAD. The landscape of penalization can be quite different from each other. Therefore, use of LAAD penalized regression results in less shrinkage on the estimated coefficients but more variable selection.}

\subsection{Implementation in general case and convergence analysis} \label{sub:optim}

Estimating parameters from given penalized least squares is an
optimization problem. Since an analytic solution is obtained in the case
of univariate penalized least squares, one can implement an algorithm
for optimization. For example, in obtaining \(\hat\beta\) in the
multivariate case, we may apply {a} coordinate descent algorithm
proposed by \citet{luo1992coordinate}, which starts with an initial
estimate and then successively optimizes along each coordinate or blocks
of coordinates. The algorithm is explained in detail as follows:

\begin{algorithm}[!ht]
\label{alg1}
\caption{A coordinate descent algorithm for LAAD penalty}
\begin{algorithmic}
\STATE \textbf{Input:} Training dataset $X=(X_1, \ldots, X_p),\, y$, initial parameter $\beta^{(0)} = \hat{\beta}^{OLS}$, a function $\hat{\theta}$ defined on Theorem 1, regularization parameter $r$, tolerance $\epsilon > 0$.
\end{algorithmic}
\begin{algorithmic}[1]
\STATE \textbf{while} $||\beta^{(t)} - \beta^{(t-1)}|| > \epsilon$ :
\STATE \quad Residual $\leftarrow y - \sum_{j=1}^p X_j\beta^{(t-1)}$
\STATE \quad \textbf{For} $j$ in $1:p$ :
\STATE \quad \quad Residual $\leftarrow$ Residual + $X_j\beta_j^{(t-1)}$
\STATE \quad \quad $z_{(t-1, j)} \leftarrow  X_j^\top$Residual 
\STATE \quad \quad $\beta_{(j)}^{(t)} \leftarrow \hat{\theta}(z_{(t-1, j)}, r)$ 
\STATE \quad \quad Residual $\leftarrow$ Residual - $X_j\beta_j^{(t)}$
\STATE \textbf{Output:} $\hat{\beta}^{LAAD} = \beta^{(t)}$
\end{algorithmic}
\end{algorithm}

Interestingly, {although the} use of LAAD penalty has been explored
in the statistics literature, a thorough analysis on the convergence of
{a} coordinate descent algorithm for LAAD regression model is still
scarce. \cite{mazumder2011sparsenet} found that application of {a}
coordinate descent algorithm to LAAD penalized model ``can produce
multiple limit points (without converging) - creating statistical
instability in the optimization procedure'', though they did not provide
a sufficient condition which assures statistical stability in the
optimization procedure. In this regard, here we provide a sufficient
condition so that {the} coordinate descent algorithm converges with
our optimization problem. To {prove} convergence, we need to
introduce the concepts of quasi-convex and hemivariate. A function is
hemivariate if a function is not constant on any interval which belongs
to its domain. A function \(f\) is quasi-convex if \[
f(x+\lambda d) \leq \max \{ f(x), \, f(x+d) \}, \ \text{for all } x, d \text{ and } \lambda\in [0,1].
\] An example of a function which is quasi-convex and hemivariate is
\(f(x)=\log(1+|x|)\).

The following lemma is useful for obtaining a sufficient condition that
our optimization problem converges with {the} coordinate descent
algorithm.

\begin{lemma} \label{lem:1}
Suppose a function $\ell: \mathbb{R}^p \rightarrow \mathbb{R}$ is defined as follows:
$$
\begin{aligned}
\ell(\beta_1, \ldots, \beta_p) = \frac{1}{2}||y-X\beta ||^2+r\sum_{j=1}^{p} \log(1+|\beta_j|) \\
\end{aligned}
$$
and $||X_j||=1$ for all $j=1,\ldots, p$. If $r \leq 1$, then $\ell_j: \beta_j \mapsto \ell(\beta_1, \ldots, \beta_p)$ is both quasi-convex and hemivariate for all $j=1,\ldots, p$.
\end{lemma}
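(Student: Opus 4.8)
The plan is to reduce the coordinatewise restriction $\ell_j$ to the univariate objective already studied in Theorem \ref{thm:1}, and then establish both properties for that one-variable function. First I would fix all coordinates except the $j$-th and introduce the partial residual $\tilde{r}_j = y - \sum_{k \neq j} X_k \beta_k$ together with $z := X_j^\top \tilde{r}_j$. Expanding $\|\tilde{r}_j - X_j \beta_j\|^2$ and using $\|X_j\| = 1$ to complete the square yields
\[
\ell_j(\beta_j) = \tfrac{1}{2}(\beta_j - z)^2 + r\log(1+|\beta_j|) + C,
\]
where $C$ collects all terms independent of $\beta_j$. Since adding a constant affects neither quasi-convexity nor the hemivariate property, it suffices to analyze $g(\theta) := \tfrac{1}{2}(\theta - z)^2 + r\log(1+|\theta|)$, which is exactly the univariate objective of Equation (\ref{eq:3}).

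For quasi-convexity the cleanest route is to prove the stronger statement that $g$ is convex whenever $r \leq 1$. On each open half-line $g$ is twice differentiable with
\[
g''(\theta) = 1 - \frac{r}{(1+|\theta|)^2}, \qquad \theta \neq 0,
\]
and since $(1+|\theta|)^2 \geq 1$ and $r \leq 1$, this gives $g''(\theta) \geq 1 - r \geq 0$; hence $g$ is convex on $(-\infty,0)$ and on $(0,\infty)$ separately. At the kink $\theta = 0$ the one-sided derivatives are $g'(0^+) = r - z$ and $g'(0^-) = -r - z$, so $g'(0^+) - g'(0^-) = 2r \geq 0$: the jump is in the direction required for convexity. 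Consequently the one-sided derivative of $g$ is non-decreasing on all of $\mathbb{R}$, so $g$ is convex, and convexity implies quasi-convexity.

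For the hemivariate property I would note that the same computation gives $g''(\theta) > 0$ for every $\theta \neq 0$ (strictness fails only in the degenerate case $r = 1$, $\theta = 0$). Thus $g$ is strictly convex on each open half-line, and a function strictly convex on an interval cannot be constant on any subinterval of it. Since every interval of positive length contains a subinterval lying entirely in $(-\infty,0)$ or in $(0,\infty)$, $g$ is not constant on any interval, which is precisely the hemivariate condition. Transferring back through the additive constant $C$ gives the claim for each $\ell_j$.

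The main point to get right — and the only place the hypothesis $r \leq 1$ enters — is the sign of $g''$ on the half-lines: the $+1$ from the strictly convex quadratic must dominate the $-r/(1+|\theta|)^2$ coming from the concavity of the log-penalty. This domination is exactly the requirement $r \leq 1$, and it is sharp: for $r > 1$ one has $g''(\theta) < 0$ on $0 < \theta < \sqrt{r}-1$, and for $z$ in the range $\left(2\sqrt{r}-1,\, r\right)$ the derivative $g'$ changes sign more than once on $(0,\infty)$, so $g$ is no longer quasi-convex — consistent with the multiplicity of limit points reported by \citet{mazumder2011sparsenet} when the penalty weight is too large. The non-differentiable point $\theta = 0$ is harmless, since the kink always bends upward irrespective of $r$ and $z$.
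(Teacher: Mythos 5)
Your proof is correct, and it takes a genuinely different route from the paper's. Both arguments begin identically, by freezing the other coordinates and completing the square (using $\|X_j\|=1$) to reduce $\ell_j$ to $g(\theta)=\tfrac{1}{2}(\theta-z)^2+r\log(1+|\theta|)$ up to an additive constant. From there the paper recycles the case analysis in the proof of Theorem \ref{thm:1}: it observes that $r\leq 1$ excludes Case 4 (the only configuration with two local minima), and that in Cases 1--3 the function is strictly decreasing up to $\theta^*$ (or $0$) and strictly increasing afterwards, so it is unimodal, hence quasi-convex and hemivariate. You instead prove the strictly stronger statement that $g$ is \emph{convex} when $r\leq 1$: on each open half-line $g''(\theta)=1-r/(1+|\theta|)^2\geq 1-r\geq 0$, and the kink at $0$ has derivative jump $g'(0^+)-g'(0^-)=2r\geq 0$, so the one-sided derivative is non-decreasing on all of $\mathbb{R}$; hemivariance follows from strict positivity of $g''$ off the origin. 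Your approach is self-contained (it does not depend on Theorem \ref{thm:1}'s casework), makes the role of the hypothesis $r\leq 1$ completely transparent, and yields the stronger conclusion that each coordinate subproblem is a convex program --- which is more than Tseng's convergence theorem needs but is informative in its own right. The paper's approach, by contrast, gets the lemma essentially for free from work already done, and its excluded Case 4 is exactly the regime $z\in(2\sqrt{r}-1,\,r)$, $r>1$, that you identify in your sharpness remark, so the two analyses agree on where and why quasi-convexity fails for $r>1$.
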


\begin{proof}
See Appendix B.
\end{proof}

Note that the sum of quasi-convex functions may not be quasi-convex.
Therefore, although both \(||y-X\beta ||^2\) and \(\log(1+|\beta_j|)\)
are quasi-convex functions as functions of \(\beta_j\) on
\(\mathbb{R}\), it does not assure that
\(\ell(\beta_1, \ldots, \beta_p)\) is a quasi-convex function for each
coordinate. Intuitively, a continuous function on \(\mathbb{R}\) is
quasi-convex if and only if it has unique local minimum. In this regard,
\(f(x) = \log(1+|x|) + 0.01(10-x)^2\) is a quasi-convex function since
\[
f'(x) = \begin{cases}
\frac{1}{1+x} + 0.02(x-10) > 0; & x > 0, \\
\frac{-1}{1-x} + 0.02(x-10) < 0; & x < 0, \\
\end{cases}
\] and \(f(x)\) has unique local (indeed, global) minimum at \(x=0\).

However, if \(g(x) = 7\log(1+|x|) + (3-x)^2\), then \(g(x)\) is not
quasi-convex since such function has two local minima, \(0\) and
\(1+1/\sqrt{2}\). Figure \ref{fig:3} {provides} visualization of
these two examples.

\begin{figure}
\centering
\includegraphics{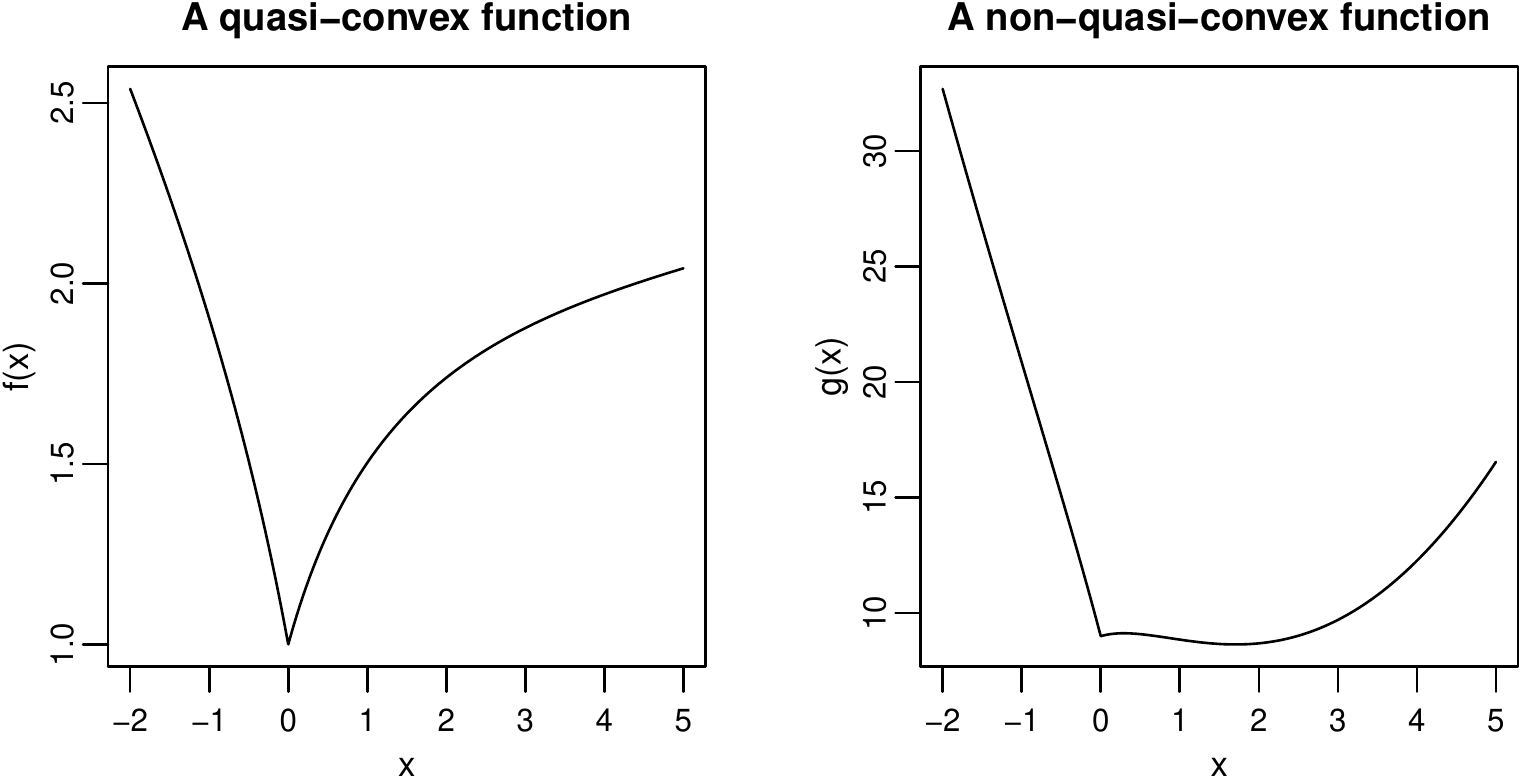}
\caption{\label{fig:3}An example of quasi-convex function and
non-quasi-convex function}
\end{figure}

Therefore, \(r \leq 1\) is the critical condition in the proof of Lemma
\ref{lem:1}, and it leads to the following theorem which assures the
convergence of coordinate descent algorithm for LAAD regression models.

\begin{theorem} \label{thm:2}
If $||X_j||=1$ for all $j=1,\ldots, p$ and $r \leq 1$, then the solution from {the} coordinate descent algorithm with function $l: \mathbb{R}^p \rightarrow \mathbb{R}$ converges to $\hat{\beta}$ where
$$
\begin{aligned}
\hat{\beta} &=\ \mathop\mathrm{argmin}_{\beta} \frac{1}{2}||y-X\beta ||^2+r\sum_{j=1}^{p} \log(1+|\beta_j|).
\end{aligned}
$$
\end{theorem}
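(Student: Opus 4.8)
The plan is to recognize Algorithm \ref{alg1} as a cyclic (Gauss--Seidel) coordinate descent map and to verify the hypotheses of a standard global-convergence theorem for such maps, with Lemma \ref{lem:1} supplying the decisive ingredient. First I would observe that, holding the other coordinates fixed and using $\|X_j\|=1$, the $j$-th subproblem reduces \emph{exactly} to the univariate problem \eqref{eq:3}: writing the partial residual $z = X_j^\top\bigl(y-\sum_{k\neq j}X_k\beta_k\bigr)$, the coordinate restriction is $\ell_j(\beta_j)=\tfrac12(z-\beta_j)^2+r\log(1+|\beta_j|)+\text{const}$, whose minimizer is $\hat\theta(z,r)$ by Theorem \ref{thm:1}---precisely the update carried out in the algorithm. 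The role of Lemma \ref{lem:1} is then that, under $r\le 1$, each $\ell_j$ is quasi-convex and hemivariate, so its minimizer is \emph{unique}: quasi-convexity forces the set of minimizers to be an interval on which every local minimum is global, and the hemivariate property rules out any nondegenerate flat interval. Hence each coordinate map is single-valued, and (being the $\argmin$ of a continuous, coercive function that depends continuously on the fixed data) continuous.

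Next I would establish the two global ingredients. Monotone descent is immediate: every coordinate step minimizes $\ell$ along its coordinate, so $\ell(\beta^{(t)})$ is nonincreasing; since $\ell\ge 0$ it is bounded below, and the values converge to some $\ell^\ast$. For compactness I would invoke coercivity: because $\sum_j\log(1+|\beta_j|)\ge\log\bigl(1+\max_j|\beta_j|\bigr)\to\infty$ as $\|\beta\|\to\infty$, the function $\ell$ is coercive \emph{regardless of the rank of} $X$, so each sublevel set is compact; as all iterates lie in $\{\beta:\ell(\beta)\le\ell(\beta^{(0)})\}$, the sequence $\{\beta^{(t)}\}$ is bounded and has limit points. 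The hemivariate property re-enters here: since the telescoped total decrease is finite, the per-step decreases tend to $0$, and because no coordinate restriction is flat on any interval, a vanishing decrease forces the corresponding step to vanish, i.e.\ the within-cycle increments $\beta^{(t+1)}-\beta^{(t)}\to 0$. Passing to the limit along a convergent subsequence and using continuity of the coordinate maps then shows that any limit point $\bar\beta$ is a coordinatewise minimizer of $\ell$; this is exactly the convergence theorem for coordinate descent under quasi-convex, hemivariate coordinatewise restrictions (the framework motivating Lemma \ref{lem:1}; cf.\ \citet{luo1992coordinate}).

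The step I expect to be the main obstacle is the passage from ``coordinatewise minimizer'' to the global minimizer $\hat\beta=\argmin\,\ell$ asserted in the statement, and the upgrade from ``limit point'' to ``limit of the whole sequence,'' since for a non-convex objective a coordinatewise stationary point need not be globally optimal and the iterates could in principle oscillate among several such points. I would address both by leaning on the uniqueness supplied by Lemma \ref{lem:1}: the quasi-convex, hemivariate structure under $r\le 1$ makes the coordinatewise minimizers isolated, so convergence of the values $\ell(\beta^{(t)})\to\ell^\ast$ together with boundedness and vanishing increments pins the sequence down to a single limit point rather than permitting a cycle---this is precisely the ``multiple limit points without converging'' pathology that \citet{mazumder2011sparsenet} flagged and that the condition $r\le 1$ is designed to exclude. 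A secondary technical check is that the indicator form of $\hat\theta$ in Theorem \ref{thm:1} (the thresholding at $|z|\ge z^\ast(r)\vee r$) does not introduce ties at the boundary that would break single-valuedness; I would verify that the minimizer remains uniquely determined there so that the coordinate map stays continuous, after which identifying the unique coordinatewise minimum with $\hat\beta$ completes the argument.
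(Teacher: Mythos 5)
Your proposal verifies exactly the same three ingredients as the paper's proof --- continuity of the quadratic part, lower semicontinuity of the separable penalty terms $\log(1+|\beta_j|)$, and quasi-convexity plus the hemivariate property of each coordinate restriction $\ell_j$ via Lemma \ref{lem:1} --- but where the paper then simply invokes Theorem 5.1 of \citet{tseng2001cdescent} as a black box, you reconstruct that theorem's content by hand (monotone descent, coercivity, compact sublevel sets, vanishing increments, cluster points are coordinatewise minimizers). Your reduction of the $j$-th subproblem to the univariate problem of Theorem \ref{thm:1} is the same computation that appears in Appendix B, and your ``secondary technical check'' is correct and worth making explicit: under $r\le 1$, Case 4 of Theorem \ref{thm:1} is vacuous and $\theta^*\to 0$ as $|z|\downarrow r$, so the coordinate update is single-valued and continuous with no ties at the threshold. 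Your coercivity remark is a small genuine addition (the paper never says why cluster points exist at all). What the citation buys is brevity and rigor up to what Tseng's theorem actually delivers; what your unpacked version buys is transparency, at the cost of leaving the delicate uniformity step --- that a vanishing per-step decrease forces a vanishing step, uniformly on the sublevel set --- as an assertion rather than a proof.

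The genuine soft spot is the one you yourself flagged as the main obstacle: passing from ``every cluster point is a coordinatewise minimizer'' to ``the sequence converges to the global $\argmin$.'' Your patch --- that Lemma \ref{lem:1} makes the coordinatewise minimizers of $\ell$ isolated, so the limit is unique and must be $\hat\beta$ --- does not follow: quasi-convexity of each one-dimensional restriction controls minimizers along coordinate lines for \emph{fixed} values of the other coordinates, but it says nothing about the structure of the set of coordinatewise-stationary points of the joint non-convex objective, and neither isolatedness nor global optimality can be extracted from it. You should know, however, that the paper's own proof contains the same leap: Tseng's Theorem 5.1 concludes only that cluster points are coordinatewise minimum (stationary) points, not global minimizers, so the identification of the limit with $\hat\beta$ is asserted there without justification as well. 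In short, your approach is the paper's approach unpacked; both arguments, read strictly, establish convergence to a coordinatewise/stationary point rather than to the global minimizer, and the one step you added to try to close that gap is the step that is actually wrong.
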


\begin{proof}
According to Theorem 5.1 of \cite{tseng2001cdescent}, it suffices to show that 1) $||y-X\beta ||^2$ is continuous on $\mathbb{R}^p$, 2) $\log(1+|\beta_j|)$ is lower semicontinuous, and 3) $\ell_j: \beta_j \mapsto l(\beta_1, \ldots, \beta_p)$ is quasi-convex and hemivariate. 1) and 2) are obvious and 3) could be shown from Lemma \ref{lem:1}.
\end{proof}

As shown in Theorem 2, the applicability of estimation with LAAD penalty
heavily depends on the range of tuning parameter \(r\), which assures
convergence of the algorithm if \(r \leq 1\). For example, when LAAD
penalization is applied to the stable estimation of loss development
factor, it is known that the estimated loss development factor for later
development years are usually quite low so it is innocuous to impose
such condition, {which is confirmed in Section 4}.

{Finally, it is possible to try extending LAAD regression given in \mbox{(\ref{eq:2})} in various directions. For example, one can consider}

\[
||Y-X\beta||^2 + r \sum_{j=1}^p \log(1 + |\beta_j|^q),
\]
{where $q \in (0, \infty)$ controls the behavior of penalty. If $q = 2$, then we expect the behavior of penalty function is very similar to that of ridge regression for small coefficients. Indeed, we can draw a penalty solution as in Figures 1 and 2 for this case and confirm that while it preserves the shrinkage property, it is unable to perform variable selection. If $q \in (0,1)$, then it is natural to expect stronger variable selection than LAAD, but optimization will be more challenging than that of LAAD, as in usual Bridge regression with $q \in (0,1)$. Further, LAAD penalized regression can be extended to other distributions so that one optimizes the following penalized likelihoods from the GLM family of the form:}
\[
\ell\left(g(X\beta); y \right) + r \sum_{j=1}^p
\log(1 + |\beta_j|),
\] {where $g$ is the link function.}

{However, it should be noted that such forms of objective functions are highly non-convex so that the theoretical results from Theorem 2 do not necessarily hold and corresponding convergence analysis should be performed on a case-by-case. To illustrate, consider a random variable $Y \in \{0, 1\}$ that is distributed as $\mathbb{P}(Y=1|x)=\dfrac{e^{\beta_0+x\beta_1}}{1+e^{\beta_0+x\beta_1}}$; we observe $\mathbf{y}=(0,1)$ and $\mathbf{x}=(1,0)$. In this case, the log-likelihood with LAAD penalty when $r=1$ is given as $\ell (\beta_0, \beta_1)= \beta_0-  \ln(1+e^{\beta_0 + \beta_1}) + \ln (1+|\beta_1|)$. In this case, one can easily check that  $\ell_1: \beta_1 \rightarrow \ell (\beta_0, \beta_1)$ may not be a quasi-convex function even if $r \leq 1$ and $||X||=1$ since $\ell_1(x+\lambda d) > -\ln 2> \max \{\ell_1(x),\, \ell_1(x+d) \}$, when $x=0, d=1, \lambda=0.5$, and $\beta_0=0$.}

\section{Simulation study} \label{sec:simul}

In this section, we conduct a simulation study so as to show the novelty
of the proposed method. Suppose we have the following nine available
covariates \((X_1, \ldots, X_9)\) and response variable \(y\) which are
generated as follows: \[
\begin{aligned}
& X_1 \sim \mathcal{N}(5,1), \ X_2 \sim \mathcal{N}(-2,1), \ X_3 \sim \mathcal{N}(1,4), \ X_4 \sim \mathcal{N}(3,4), \ X_5 \sim \mathcal{N}(0,4), \\
& X_6 \sim \mathcal{N}(0,9), \ X_7 \sim \mathcal{N}(-3,4), \ X_8 \sim \mathcal{N}(2,1), \ X_9 \sim \mathcal{N}(3,1), \ \epsilon \sim \mathcal{N}(0,1), \\
&  y = -X_1+X_2+X_3-X_4+X_5-X_6+X_7+X_8-X_9-10X_1X_6+X_2X_3 + 0.1X_3X_4 - 0.01X_4X_6 + \epsilon,
\end{aligned}
\] so that the simulation scheme can incorporate possible interactions
in the model (while the model is still sparse enough)
{for which the magnitude} effects vary. One can check that if a
regression model is calibrated using \((X_1, X_2, \ldots, X_9)\), then
the estimated regression coefficients are all significant. However, even
if all covariates are significant by themselves, omission of effective
interaction terms can lead to bias in the estimated coefficients and
subsequently lack of fit as illustrated in Figure \ref{fig:4}. In Figure
\ref{fig:4}, reduced model means a linear model fitted only with
\((X_1, \, X_2, \, \ldots, \, X_9)\), while true model is a linear model
fitted with
\((X_1, \, X_2, \, \ldots, \, X_9, \, X_1X_6, \, X_2X_3, \, X_3X_4, \, X_4X_6)\).

\begin{figure}
\centering
\includegraphics{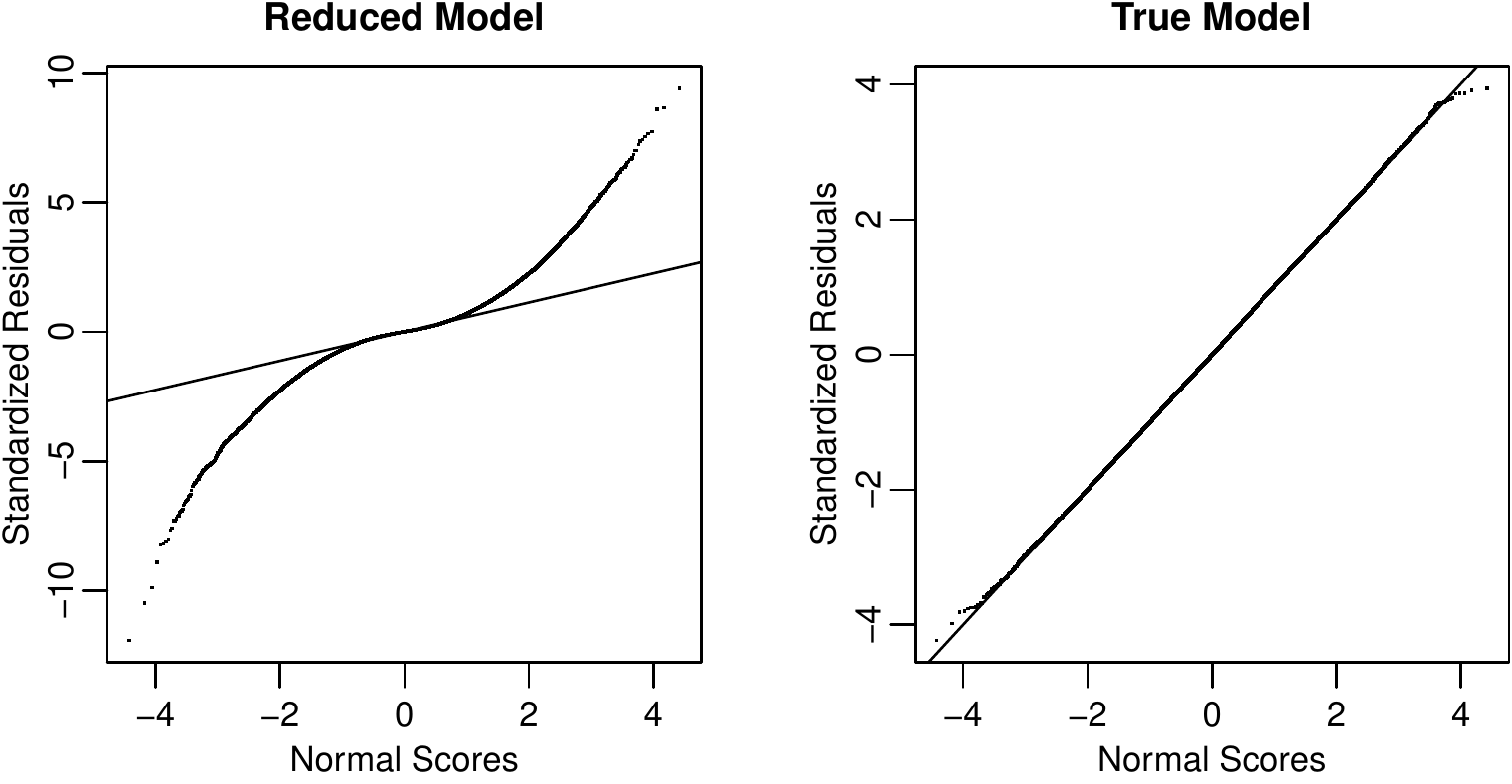}
\caption{\label{fig:4}QQplots for reduced model and true model}
\end{figure}

On the other hand, including every interaction terms also may end up
with {an inferior} model since it may accumulate noise in the
estimation, which leads to {larger} variances in the estimates. As
elaborated in \citet{james2013isl}, the mean squared error (MSE) of a
predicted value under a linear model is determined by both the variance
and the squared bias of the estimated regression coefficients as
follows: \begin{equation}\label{eq:4}
\begin{aligned}
{\mathbb E}\left[y_0-\hat{f}(x_0)\right]^2 &= Var( \hat{f}(x_0)) + [\text{Bias}( \hat{f}(x_0))]^2 +Var(\epsilon) \\
                       &= (x_0)' \left[Var( \hat{\beta}) + \text{Bias}( \hat{\beta})^2  \right](x_0) + \sigma^2.
\end{aligned}
\end{equation}

{As shown in Equation (\mbox{\ref{eq:4}}), prediction performance of a new observation from out-of-sample validation set, $y_0$, can be determined by the stochastic error, $\sigma^2$, and the parameter error, $MSE(\hat{\beta})=Var( \hat{\beta}) + \text{Bias}( \hat{\beta})^2$, in a linear predictive model. Henceforth, it suffices to evaluate the estimation performance of $\hat{\beta}$ to assess the predictive ability of a model since the stochastic error is irreducible and independent of the predictive model.}

{Here} we {also} note that by including fewer variables in our
model with {the} variable selection, we may get lower
\(Var( \hat{\beta})\). However, it could increase
\([\text{Bias}( \hat{\beta})]^2\) due to omitted variable bias (i.e., if
a variable has been selected out) or inherent bias of the estimated
value because of the penalization. Therefore, it implies that if most of
the original variables are significant so that the magnitude of the bias
is too high, then the benefit of a reduced \(Var( \hat{\beta})\) is
compensated by a higher \([\text{Bias}( \hat{\beta})]^2\). In this
regard, variable selection should be performed carefully to make a
balance between the bias and variance and get better prediction with
lower mean squared error.

To show the novelty of our proposed penalty function, we first obtain
100 replications of simulated samples
\((X_1, \, X_2, \, \ldots, \, X_9, \, y)\) with
{sample sizes $100, 300$, and $1000$ to account for the impact of ratio $p/n$ on estimation}
and estimate the regression coefficients based on the following seven
models:

\begin{itemize}
\item[(i)] \textbf{Full model}: a linear model fitted with $(X_1, \, X_2, \, \ldots,  \,X_9)$ and every possible interaction among them,
\item[(ii)]  \textbf{Reduced model}: a linear model fitted only with $(X_1,\,  X_2, \, \ldots,\, X_9)$,
\item[(iii)] \textbf{Best model}: Full model regularized with $L_0$ penalty (forward feature selection),
\item[(vi)]  \textbf{LASSO model}: Full model regularized with $L_1$ penalty,
\item[(v)]   \textbf{MCP model}: Full model regularized with MC penalty,
\item[(vi)]  \textbf{SCAD model}: Full model regularized with SCAD penalty,
\item[(vii)] \textbf{LAAD model}: Full model regularized with LAAD penalty.
\end{itemize}
{All these models were calibrated using the computational routines from the statistical software \texttt{R}. For \textbf{Full model} and \textbf{Reduced model}, the basic \texttt{lm} function was used. \textbf{Best model} was fitted using \texttt{regsubsets} function in \texttt{leaps} package \mbox{\citep{lumley2013package}}. \textbf{LASSO model} was fitted using \texttt{glmnet} in \mbox{\cite{friedman2009glmnet}}. \textbf{MCP model} and \textbf{SCAD model} were fitted using \texttt{plus} package \mbox{\citep{zhang2009plus}}. There is no standard package to solve the \textbf{LAAD model}.}

To evaluate the estimation results under each model, we introduce the
following metrics, which measure the discrepancy between the true
coefficients and estimated coefficients under each model. \[
\begin{aligned}
 \text{Bias for } \beta_{j} &= \frac{1}{100}\sum_{s=1}^{100} (\beta_j - \hat{\beta}_{j(s)}), \ \  \text{Root Mean Squared Error for } \beta_{j} = \sqrt{\frac{1}{100}\sum_{s=1}^{100} (\beta_j - \hat{\beta}_{j(s)})^2}, \\
\end{aligned}
\] where \(\beta_j\) means the true value of \(j^{th}\) coefficient and
\(\hat{\beta}_{j(s)}\) refers to the estimated value of \(j^{th}\)
coefficient with \(s^{th}\) simulated sample. According to Table
\ref{tab:1}, \textbf{Full model} is most favored in terms of the bias of
estimated coefficients, which is reasonable since ordinary least square
(OLS) estimator is unbiased. However, one can see MSEs of estimated
coefficients under \textbf{Full model} tend to be greater than those of
\textbf{LAAD model} so that \textbf{LAAD model} is expected to provide
better estimation,
{especially when $n$, the sample size, is relatively smaller compared to $p$, the number of covariates.}
It is also observed that estimation results with \textbf{Reduced model}
is quite poor whereas the performance of \textbf{Best model} and
\textbf{LAAD model} are the best.

\begin{table}[!h]

\caption{\label{tab:unnamed-chunk-1}\label{tab:1}Summary of estimation}
\centering
\resizebox{\linewidth}{!}{
\begin{tabular}[t]{lrrrrrrrrrrrrrr}
\toprule
\multicolumn{1}{c}{ } & \multicolumn{7}{c}{Bias} & \multicolumn{7}{c}{RMSE} \\
\cmidrule(l{3pt}r{3pt}){2-8} \cmidrule(l{3pt}r{3pt}){9-15}
  & Full & Reduced & Best & LASSO & MCP & SCAD & LAAD & Full & Reduced & Best & LASSO & MCP & SCAD & LAAD\\
\midrule
\addlinespace[0.3em]
\multicolumn{15}{l}{\textbf{Sample size: n=1000}}\\
\hspace{1em}x1 & -0.006 & 0.107 & -0.001 & -0.275 & -0.714 & -0.713 & 0.000 & 0.082 & 0.977 & 0.042 & 0.526 & 0.747 & 0.743 & 0.030\\
\hspace{1em}x2 & -0.020 & 1.080 & -0.002 & 0.490 & -0.938 & -0.916 & -0.002 & 0.165 & 1.399 & 0.056 & 0.930 & 0.970 & 0.957 & 0.040\\
\hspace{1em}x3 & -0.011 & -1.639 & -0.014 & -0.492 & -0.970 & -0.960 & -0.027 & 0.099 & 1.713 & 0.086 & 0.552 & 0.985 & 0.978 & 0.054\\
\hspace{1em}x4 & 0.003 & 0.164 & -0.003 & -0.044 & 0.921 & 0.910 & 0.005 & 0.116 & 0.438 & 0.055 & 0.406 & 0.959 & 0.953 & 0.024\\
\hspace{1em}x5 & 0.012 & -0.048 & 0.006 & -0.178 & -0.133 & -0.036 & -0.014 & 0.123 & 0.512 & 0.042 & 0.241 & 0.363 & 0.131 & 0.030\\
\hspace{1em}x6 & -0.004 & -50.062 & -0.024 & -0.265 & 0.968 & 0.989 & 0.035 & 0.088 & 50.066 & 0.071 & 0.284 & 0.985 & 0.995 & 0.079\\
\hspace{1em}x7 & 0.004 & -0.055 & -0.001 & -0.231 & -0.941 & -0.896 & -0.006 & 0.114 & 0.471 & 0.050 & 0.478 & 0.966 & 0.941 & 0.022\\
\hspace{1em}x8 & -0.003 & 0.045 & -0.001 & -0.250 & -0.953 & -0.966 & -0.084 & 0.180 & 0.822 & 0.135 & 0.746 & 0.975 & 0.980 & 0.246\\
\hspace{1em}x9 & 0.012 & 0.232 & 0.018 & 0.636 & 0.940 & 0.979 & 0.053 & 0.148 & 1.054 & 0.091 & 0.809 & 0.969 & 0.986 & 0.166\\
\hspace{1em}`x1 : x6` & 0.001 & 10.000 & 0.001 & 0.045 & -0.189 & -0.194 & -0.009 & 0.012 & 10.000 & 0.012 & 0.048 & 0.193 & 0.195 & 0.016\\
\hspace{1em}`x2 : x3` & -0.001 & -1.000 & -0.001 & -0.074 & -0.206 & -0.255 & -0.014 & 0.016 & 1.000 & 0.016 & 0.079 & 0.248 & 0.312 & 0.021\\
\hspace{1em}`x3 : x4` & 0.001 & -0.100 & 0.001 & 0.004 & -0.055 & -0.074 & -0.004 & 0.008 & 0.100 & 0.008 & 0.012 & 0.084 & 0.088 & 0.008\\
\hspace{1em}`x4 : x6` & 0.000 & 0.010 & 0.006 & 0.007 & 0.008 & 0.010 & 0.005 & 0.005 & 0.010 & 0.009 & 0.008 & 0.011 & 0.010 & 0.006\\
\hline
\addlinespace[0.3em]
\multicolumn{15}{l}{\textbf{Sample size: n=300}}\\
\hspace{1em}x1 & -0.006 & 0.107 & -0.001 & -0.275 & -0.714 & -0.713 & 0.000 & 0.176 & 1.606 & 0.109 & 0.787 & 0.767 & 0.757 & 0.127\\
\hspace{1em}x2 & -0.020 & 1.080 & -0.002 & 0.490 & -0.938 & -0.916 & -0.002 & 0.370 & 1.900 & 0.328 & 1.308 & 0.974 & 0.990 & 0.377\\
\hspace{1em}x3 & -0.011 & -1.639 & -0.014 & -0.492 & -0.970 & -0.960 & -0.027 & 0.221 & 1.905 & 0.166 & 0.599 & 0.995 & 0.990 & 0.133\\
\hspace{1em}x4 & 0.003 & 0.164 & -0.003 & -0.044 & 0.921 & 0.910 & 0.005 & 0.216 & 0.936 & 0.109 & 0.648 & 0.944 & 0.958 & 0.046\\
\hspace{1em}x5 & 0.012 & -0.048 & 0.006 & -0.178 & -0.133 & -0.036 & -0.014 & 0.259 & 0.903 & 0.116 & 0.395 & 0.594 & 0.409 & 0.120\\
\hspace{1em}x6 & -0.004 & -50.062 & -0.024 & -0.265 & 0.968 & 0.989 & 0.035 & 0.171 & 49.967 & 0.111 & 0.390 & 0.990 & 0.995 & 0.177\\
\hspace{1em}x7 & 0.004 & -0.055 & -0.001 & -0.231 & -0.941 & -0.896 & -0.006 & 0.214 & 0.872 & 0.160 & 0.681 & 0.934 & 0.945 & 0.060\\
\hspace{1em}x8 & -0.003 & 0.045 & -0.001 & -0.250 & -0.953 & -0.966 & -0.084 & 0.397 & 1.873 & 0.490 & 1.332 & 0.975 & 0.985 & 0.540\\
\hspace{1em}x9 & 0.012 & 0.232 & 0.018 & 0.636 & 0.940 & 0.979 & 0.053 & 0.326 & 1.822 & 0.339 & 1.016 & 0.989 & 0.989 & 0.260\\
\hspace{1em}`x1 : x6` & 0.001 & 10.000 & 0.001 & 0.045 & -0.189 & -0.194 & -0.009 & 0.023 & 10.000 & 0.020 & 0.065 & 0.191 & 0.193 & 0.034\\
\hspace{1em}`x2 : x3` & -0.001 & -1.000 & -0.001 & -0.074 & -0.206 & -0.255 & -0.014 & 0.033 & 1.000 & 0.032 & 0.093 & 0.265 & 0.312 & 0.036\\
\hspace{1em}`x3 : x4` & 0.001 & -0.100 & 0.001 & 0.004 & -0.055 & -0.074 & -0.004 & 0.016 & 0.100 & 0.016 & 0.022 & 0.081 & 0.089 & 0.018\\
\hspace{1em}`x4 : x6` & 0.000 & 0.010 & 0.006 & 0.007 & 0.008 & 0.010 & 0.005 & 0.012 & 0.010 & 0.011 & 0.011 & 0.014 & 0.013 & 0.009\\
\hline
\addlinespace[0.3em]
\multicolumn{15}{l}{\textbf{Sample size: n=100}}\\
\hspace{1em}x1 & -0.039 & 0.128 & -0.158 & -0.168 & -0.703 & -0.619 & 0.015 & 0.437 & 3.051 & 0.350 & 1.401 & 0.838 & 0.827 & 0.298\\
\hspace{1em}x2 & -0.040 & 1.191 & -0.305 & 0.170 & -0.910 & -0.830 & -0.305 & 0.923 & 3.366 & 0.677 & 2.148 & 0.982 & 0.997 & 0.584\\
\hspace{1em}x3 & 0.001 & -1.705 & -0.119 & -0.572 & -0.976 & -0.957 & -0.377 & 0.613 & 2.302 & 0.490 & 0.762 & 0.991 & 0.980 & 0.585\\
\hspace{1em}x4 & 0.042 & 0.326 & 0.122 & 0.355 & 0.878 & 0.867 & 0.097 & 0.523 & 1.601 & 0.385 & 0.819 & 0.942 & 0.937 & 0.343\\
\hspace{1em}x5 & -0.149 & 0.078 & -0.203 & -0.520 & -0.730 & -0.560 & -0.330 & 0.612 & 1.568 & 0.507 & 0.702 & 0.856 & 0.745 & 0.516\\
\hspace{1em}x6 & 0.046 & -49.945 & 0.042 & -0.588 & 1.000 & 1.000 & 0.135 & 0.353 & 49.981 & 0.310 & 0.826 & 1.000 & 1.000 & 0.570\\
\hspace{1em}x7 & 0.038 & 0.000 & -0.112 & -0.177 & -0.891 & -0.868 & -0.129 & 0.558 & 1.651 & 0.383 & 1.088 & 0.946 & 0.931 & 0.370\\
\hspace{1em}x8 & 0.043 & 0.155 & -0.621 & -0.489 & -0.978 & -0.983 & -0.606 & 0.910 & 3.275 & 0.888 & 2.066 & 0.987 & 0.990 & 0.751\\
\hspace{1em}x9 & 0.024 & 0.145 & 0.332 & 0.392 & 0.856 & 0.806 & 0.232 & 0.771 & 3.355 & 0.677 & 1.576 & 0.974 & 0.981 & 0.528\\
\hspace{1em}`x1 : x6` & -0.006 & 10.000 & -0.011 & 0.111 & -0.180 & -0.181 & -0.019 & 0.053 & 10.000 & 0.047 & 0.147 & 0.184 & 0.185 & 0.111\\
\hspace{1em}`x2 : x3` & 0.004 & -1.000 & -0.008 & -0.119 & -0.222 & -0.213 & -0.043 & 0.072 & 1.000 & 0.063 & 0.157 & 0.291 & 0.300 & 0.080\\
\hspace{1em}`x3 : x4` & 0.000 & -0.100 & -0.005 & -0.008 & -0.020 & -0.048 & -0.007 & 0.042 & 0.100 & 0.045 & 0.043 & 0.090 & 0.088 & 0.036\\
\hspace{1em}`x4 : x6` & 0.002 & 0.010 & 0.006 & 0.007 & 0.000 & 0.007 & 0.001 & 0.026 & 0.010 & 0.016 & 0.020 & 0.027 & 0.019 & 0.016\\
\hhline{===============}
\end{tabular}}
\end{table}

Besides the values of estimated coefficients, it is also of interest to
capture correct degree of sparsity in a model with the following
measures: \[
\begin{aligned}
\text{Mean } L_1 \text{ norm difference} &= \frac{1}{100}\sum_{s=1}^{100}\sum_{j=1}^p |\beta_j - \hat{\beta}_{j(s)}|, \\
\text{Mean }  L_0 \text{ norm difference} &= \frac{1}{100}\sum_{s=1}^{100}\sum_{j=1}^p \mathbbm{1}_{\{ \beta_j = 0 \ne \hat{\beta}_{j(s)} \text{ or } \beta_j \ne 0 = \hat{\beta}_{j(s)}\}}.
\end{aligned}
\] Table \ref{tab:2} shows how \textbf{LAAD model} captures the sparsity
of the true model correctly.
{Again, it is shown that the estimation efficiency deteriorates as we have smaller sample in all models but one can see that \textbf{LAAD model} and \textbf{Best model} show the smallest mean $L_1$ and $L_0$ norm differences, respectively,}
while \textbf{Full model} fails to capture the sparsity of the true
model. Therefore, this simulation supports the assertion that LAAD
penalty can be utilized in practice with better performance
{in a reasonable amount of computation time relative to other penalization methods.}

\begin{table}[!h]

\caption{\label{tab:unnamed-chunk-2}\label{tab:2}Norm differences and computation times for each model}
\centering
\begin{tabular}[t]{lrrrrrrr}
\toprule
  & Full & Reduced & Best & LASSO & MCP & SCAD & LAAD\\
\midrule
\addlinespace[0.3em]
\multicolumn{8}{l}{\textbf{Mean L1 norm differences}}\\
\hspace{1em}$n=1000$ & 2.325 & 68.429 & 1.578 & 5.961 & 10.048 & 9.888 & 1.450\\
\hspace{1em}$n=300$ & 3.750 & 71.584 & 2.624 & 8.205 & 10.381 & 10.216 & 2.481\\
\hspace{1em}$n=100$ & 7.657 & 78.160 & 5.544 & 12.607 & 11.066 & 10.693 & 5.224\\
\hline
\addlinespace[0.3em]
\multicolumn{8}{l}{\textbf{Mean L0 norm differences}}\\
\hspace{1em}$n=1000$ & 31.000 & 5.000 & 6.550 & 19.930 & 12.900 & 12.410 & 4.510\\
\hspace{1em}$n=300$ & 31.000 & 5.000 & 7.620 & 22.290 & 13.910 & 13.240 & 9.100\\
\hspace{1em}$n=100$ & 31.000 & 5.000 & 10.820 & 23.610 & 15.410 & 15.330 & 13.790\\
\hline
\addlinespace[0.3em]
\multicolumn{8}{l}{\textbf{Average computation times}}\\
\hspace{1em}$n=1000$ & 0.007 & 0.002 & 0.009 & 0.087 & 0.415 & 0.439 & 0.090\\
\hspace{1em}$n=300$ & 0.005 & 0.002 & 0.008 & 0.071 & 0.331 & 0.349 & 0.040\\
\hspace{1em}$n=100$ & 0.005 & 0.002 & 0.007 & 0.067 & 0.296 & 0.229 & 0.024\\
\hhline{========}
\end{tabular}
\end{table}

\section{Empirical application: loss development methods} \label{sec:insurance}

{Claims reserving is a key task to assure solvency of insurer. This section demonstrates an empirical application of using LAAD regression and in spite of the non-convex nature of the penalty, it has the promise of producing stable and smooth estimates of loss development factors, as well as reasonable estimates of insurance claim reserves. For additional application to insurance ratemaking, please see appendix.}

\subsection{Data characteristics} \label{sub:data}

A dataset from ACE Limited 2011 Global Loss Triangles is used for our
empirical analysis which is shown in Tables 3 and 4. This dataset is a
summarization of two lines of insurance business that include
\textbf{General Liability} and \textbf{Other Casualty} in the form of
reported claim triangles.

{The} given dataset can also be expressed {as}:
\begin{equation} \label{eq:trn}
\mathcal{D}_{1:I} = \{Y_{ij}^{(n)}:1 \leq i \leq I \,\text{  and  } \,1 \leq j \leq \min(I, I+1-i),\, n=1,2 \},
\end{equation} where \(Y^{(n)}_{ij}\) {refers to} the reported claim
for \(n^{th}\) line of insurance business in \(i^{th}\) accident years
with \(j^{th}\) development lag. Note that \(I=10\) in our case and
these are displayed in upper-left parts of Tables 3 and 4.

Based on the reported claim data (upper triangle), an insurance company
needs to predict the ultimate claims (lower triangle) described as
follows: \begin{equation} \label{eq:tst}
\mathcal{D}_{I+k} = \{Y_{ij}^{(n)}:1+k \leq i \leq I \,\text{  and  } \, j = I+1+k-i,\, n=1,2 \}.
\end{equation}

\begin{table}[!ht]
\caption{Reported claim triangle for General Liability}
\centering
\begin{tabular}[t]{lrrrrrrrrrr}
\hhline{===========}
  & DL 1 & DL 2 & DL 3 & DL 4 & DL 5 & DL 6 & DL 7 & DL 8 & DL 9 & DL 10\\
\midrule
AY 1 & 87,133 & 146,413 & 330,129 & 417,377 & 456,124 & 556,588 & 563,699 & 570,371 & 598,839 & 607,665\\
AY 2 & 78,132 & 296,891 & 470,464 & 485,708 & 510,283 & 568,528 & 591,838 & 662,023 & 644,021 & \color{blue} 654,481\\
AY 3 & 175,592 & 233,149 & 325,726 & 449,556 & 532,233 & 617,848 & 660,776 & 678,142 & \color{blue} 696,378 & \\
AY 4 & 143,874 & 342,952 & 448,157 & 599,545 & 786,951 & 913,238 & 971,329 & \color{blue} 1,013,749 &  & \\
AY 5 & 140,233 & 284,151 & 424,930 & 599,393 & 680,687 & 770,348 & \color{blue} 820,138 &  &  & \\
AY 6 & 137,492 & 323,953 & 535,326 & 824,561 & 1,056,066 & \color{blue} 1,118,516 &  &  &  & \\
AY 7 & 143,536 & 350,646 & 558,391 & 708,947 & \color{blue} 825,059 &  &  &  &  & \\
AY 8 & 142,149 & 317,203 & 451,810 & \color{blue} 604,155 &  &  &  &  &  & \\
AY 9 & 128,809 & 298,374 & \color{blue} 518,788 &  &  &  &  &  &  & \\
AY 10 & 136,082 & \color{blue} 339,516 &  &  &  &  &  &  &  & \\
\hhline{===========}
\end{tabular}
\end{table}

\begin{table}[!ht]
\caption{Reported claim triangle for Other Casualty }
\centering
\begin{tabular}[t]{lrrrrrrrrrr}
\hhline{===========}
  & DL 1 & DL 2 & DL 3 & DL 4 & DL 5 & DL 6 & DL 7 & DL 8 & DL 9 & DL 10\\
\midrule
AY 1 & 201,702 & 262,233 & 279,314 & 313,632 & 296,073 & 312,315 & 308,072 & 309,532 & 310,710 & 297,929\\
AY 2 & 202,361 & 240,051 & 265,869 & 302,303 & 347,636 & 364,091 & 358,962 & 361,851 & 355,373 & \color{blue} 357,075\\
AY 3 & 243,469 & 289,974 & 343,664 & 360,833 & 372,574 & 373,362 & 382,361 & 380,258 & \color{blue} 384,914  & \\
AY 4 & 338,857 & 359,745 & 391,942 & 411,723 & 430,550 & 442,790 & 437,408 & \color{blue} 438,507  &  & \\
AY 5 & 253,271 & 336,945 & 372,591 & 393,272 & 408,099 & 415,102 & \color{blue} 421,743 &  &  & \\
AY 6 & 247,272 & 347,841 & 392,010 & 425,802 & 430,843 & \color{blue} 455,038 &  &  &  & \\
AY 7 & 411,645 & 612,109 & 651,992 & 688,353 & \color{blue} 711,802 &  &  &  &  & \\
AY 8 & 254,447 & 368,721 & 405,869 & \color{blue} 417,660 &  &  &  &  &  & \\
AY 9 & 373,039 & 494,306 & \color{blue} 550,082 &  &  &  &  &  &  & \\
AY 10 & 453,496 & \color{blue} 618,879 &  &  &  &  &  &  &  & \\
\hhline{===========}
\end{tabular}
\end{table}

\subsection{Model specifications and estimation} \label{sub:model}

In our search for a loss development model, we use cross-classfiied
model which was also introduced in \citet{shi2011depreserve} and
\citet{taylor2016}. For each \(n^{th}\) line of business, unconstrained
lognormal cross-classified model is formulated as follows:
\begin{equation} \label{eq:unconst}
\begin{aligned}
{\mathbb E}\left[ \log  {Y_{ij}^{(n)}}  \right] =\mu_{ij}^{(n)}=\gamma^{(n)}+\alpha^{(n)}_i + \delta^{(n)}_j,
\end{aligned}
\end{equation} where \(\gamma^{(n)}\) means the overall mean of the
losses from \(n^{th}\) line of business, \(\alpha^{(n)}_i\) is the
effect for \(i^{th}\) accident year and \(\delta^{(n)}_j\) means the
cumulative development at \(j^{th}\) year.
{Note that it is customary to use either lognormal
or gamma distribution in the cross-classified loss development model and it has been shown that use of
lognormal distribution has better goodness-of-fit than gamma distribution in Table 4 of \mbox{\citet{jeong2020vinereserving}}, which used the same dataset as in this article.}

\begin{table}[!ht]

\caption{\label{tab:naive}Summary of unconstrained model estimation}
\centering
\begin{tabular}[t]{lrrrr}
\toprule
\multicolumn{1}{c}{ } & \multicolumn{2}{c}{General Liability} & \multicolumn{2}{c}{Other Casualty} \\
\cmidrule(l{3pt}r{3pt}){2-3} \cmidrule(l{3pt}r{3pt}){4-5}
  & Estimate & Pr(>|t|) & Estimate & Pr(>|t|)\\
\midrule
$\gamma$ & 11.382 & 0.000 & 12.173 & 0.000\\
$\delta_2$ & 0.789 & 0.000 & 0.260 & 0.000\\
$\delta_3$ & 1.236 & 0.000 & 0.359 & 0.000\\
$\delta_4$ & 1.515 & 0.000 & 0.430 & 0.000\\
$\delta_5$ & 1.673 & 0.000 & 0.464 & 0.000\\
$\delta_6$ & 1.779 & 0.000 & 0.491 & 0.000\\
$\delta_7$ & 1.825 & 0.000 & 0.489 & 0.000\\
$\delta_8$ & 1.850 & 0.000 & 0.506 & 0.000\\
$\delta_9$ & 1.874 & 0.000 & 0.508 & 0.000\\
$\delta_{10}$ & 1.936 & 0.000 & 0.432 & 0.000\\
\hline
$\alpha_2$ & 0.168 & 0.020 & 0.065 & 0.027\\
$\alpha_3$ & 0.221 & 0.004 & 0.188 & 0.000\\
$\alpha_4$ & 0.505 & 0.000 & 0.370 & 0.000\\
$\alpha_5$ & 0.396 & 0.000 & 0.282 & 0.000\\
$\alpha_6$ & 0.616 & 0.000 & 0.323 & 0.000\\
$\alpha_7$ & 0.570 & 0.000 & 0.835 & 0.000\\
$\alpha_8$ & 0.461 & 0.000 & 0.347 & 0.000\\
$\alpha_9$ & 0.410 & 0.002 & 0.667 & 0.000\\ 
$\alpha_{10}$ & 0.439 & 0.010 & 0.852 & 0.000\\ \hline
Adj-$R^2$ & \multicolumn{2}{c}{1.000}  &  \multicolumn{2}{c}{1.000}  \\
\hhline{=====}
\end{tabular}
\end{table}

It is natural that incremental reported loss amount gradually decreases
while cumulative reported loss amount still increases until it is
developed to ultimate level, which is equivalent to
\(\delta_j \geq \delta_{j'}\) for \(j \geq j'\). It is observed,
however, that estimated values \(\delta_j\) do not show that pattern for
both lines of business in Table \ref{tab:naive}.

In order to handle aforementioned issue, we propose a penalized
cross-classified model. Since both \(\gamma\) and \(\alpha\) are
nuisance parameters in terms of loss development, we modify the
formulation in (\ref{eq:unconst}) in the following {manner}:
\begin{equation} \label{eq:inc}
C_{i,j+1}^{(n)} := \log  \frac{{Y_{i,j+1}^{(n)}}}{Y_{i,j}^{(n)}}  \text{  and  } \ C_{i,j+1}^{(n)} \sim {\mathcal{N}} \left(\zeta^{(n)}_{j+1},\, {\sigma}^2 \right) \  \text{where } \delta^{(n)}_j=\sum_{l=1}^j \zeta^{(n)}_l, \ \zeta^{(n)}_{j+1}=\eta_{j+1} + \kappa^{(n)}_{j+1}.
\end{equation} In this formulation, mean of \(C_{i,l}^{(n)}\),
\(\zeta_l\) can be interpreted as incremental development factor from
\((l-1)^{th}\) year to \(l^{th}\) year so that if \(\zeta_{L+1}=0\) for
a certain value of \(L\), then it implies there is no more development
of loss after \(L\) years of development and \(\zeta_L\) would determine
the tail factor. Therefore, this formulation allows us to choose tail
factor based on {the} variable selection procedure performed with
penalized regression on given data, not by a subjective judgment.
Furthermore, \(\zeta_l\) consists of two parts; \(\eta_l\) which
accounts for the common payment pattern for all lines of business in the
same company, and \(\kappa^{(n)}_l\) which accounts for the specific
payment pattern for each line of business. This approach allows us to
consider possible dependence between the two lines of business in a
{simplified manner}. Those who are interested in more complicated
dependence modeling among different lines of business might refer to
\citet{shi2012multireserve} and \citet{jeong2020vinereserving}.
{Henceforth}, we propose the following six model specifications:

\begin{itemize}

\item \textbf{Unconstrained model:} a model which minimizes the following for all lines of business simultaneously:
$$
 \sum_{i=1}^I \sum_{j=1}^{I-i} \sum_{n=1}^{N}  \left({C_{i,j+1}^{(n)}} - \zeta^{(n)}_{j+1}\right)^2,
$$
\item \textbf{Best subset model:} a model which minimizes Bayesian information criterion (BIC) based on the estimated parameter values.

\item \textbf{LASSO / SCAD / MCP / LAAD constrained models:} models which minimize the following for all lines of business simultaneously with $p_{\lambda}(\cdot)$ as defined in (\ref{eq:penalty}):
$$
\sum_{i=1}^I \sum_{j=1}^{I-i} \sum_{n=1}^{N}    \left({C_{i,j+1}^{(n)}} - \zeta^{(n)}_{j+1}\right)^2 +\left[ \sum_{j=2}^{J-1} \left( p(\eta_{j+1};\lambda) + \sum_{n=1}^{N-1} p_{\lambda}(\kappa^{(n)}_{j+1};\lambda) \right) \right],
$$
\end{itemize}

Although \(\zeta_l\) has been decomposed into two parts (common payment
patterns and line-specific payment patterns), one can also model
\(\zeta_l\) directly for each line of business seperately. Note that for
all constrained models, \(\eta_{2}\) is not penalized in {the}
estimation to avoid underreserving issues
{as a result of regularization} and \(\kappa^{(N)}_{j+1}=0\) for all
\(j\) {to address} the identifiability issue.

When {the} variable selection {through} penalization is
implemented, it is required to set the tuning parameter, which controls
the magnitude of {the} penalty. In search for the tuning parameter
for LASSO / SCAD / MCP / LAAD constrained models, the usual
cross-validation method is applied to choose {the} optimal penalty so
that the average of root mean squared errors (RMSEs) on \(k\)-fold
cross-validation with each value of tuning parameters are examined as
described in \cite{friedman2009glmnet}.
{To avoid overpenalization by choosing a penalty that yields the smallest average of cross-validation RMSEs, we use the geometric average of two penalty values, where the average of cross-validation RMSEs is the smallest and within one standard error of the minimum, respectively. As stated in Theorem 2, it is critical to assure $r \leq 1$ for the convergence of the coordinate descent algorithm with LAAD penalty, which is clearly satisfied here since the optimal $r$ was chosen as $\log(1.005261)$. Figure \mbox{\ref{fig:edf}} shows the effective degree of freedom (edf), the number of non-zero coefficients with LAAD penalization, and the scaled average of cross-validation RMSEs depending on the chosen value of $r$. According to \mbox{\citet{stein1981estimation}} and \mbox{\citet{efron1986edf}}, edf is given as follows:}
\[
\text{df}(r) = \sum_{i=1}^n \frac{ \text{Cov}(y_i, \hat{\mu}_i(r) )}{\sigma^2}  = \nabla \mathbb{E}(\hat{\mu}_i(r) ) = \sum_{i=1}^n \frac{\partial \hat{\mu}_i(r)}{\partial y_i},
\]
{where $\hat{\mu}_i(r)=\mathbf{x}_i' \hat{\beta}(r)$ and $\hat{\beta}(r)$ is an estimate of $\beta$ using $\mathbf{y}=(y_1, \ldots, y_n)$ as the response variable and $r$ determines the degree of penalization. While it is known that edf is the same as the number of non-zero coefficients in LASSO regression \mbox{\citep{zou2007lassoedf}}, the edf in LAAD regression needs to be evaluated empirically. Let $\hat{\beta}^{(i+)}(r)$ and $\hat{\beta}^{(i-)}(r)$ be estimates of $\beta$ using $\mathbf{y}+\epsilon \cdot e_i$ and $\mathbf{y}-\epsilon \cdot e_i$ as the response variables, where $e_i$ is the $i^{th}$ Cartesian coordinate vector and $\epsilon>0$ is an amount of perturbation, respectively. One can then calculate the edf empirically as follows:}
\[
\text{df}(r) = \sum_{i=1}^n \frac{\partial \hat{\mu}_i(r)}{\partial y_i} =  \sum_{i=1}^n \mathbf{x}_i'\frac{\partial \hat{\beta}(r)}{\partial y_i} \simeq \sum_{i=1}^n \mathbf{x}_i' \left( \frac{\hat{\beta}^{(i+)}(r) - \hat{\beta}^{(i-)}(r)}{2\epsilon}  \right).
\]

{The edf that corresponds to the optimal value of $r=\log(1.005261)\simeq 0.00525$ is $12.10501$, which is marked with a blue diamond in Figure \mbox{\ref{fig:edf}}. It also shows that the edf and the number of non-zero coefficients show similar patterns depending on the degree of penalization so that the number of non-zero coefficients might be used as proxy for the effective degree of freedom in a LAAD regression model.}

\begin{figure}
\centering
\includegraphics{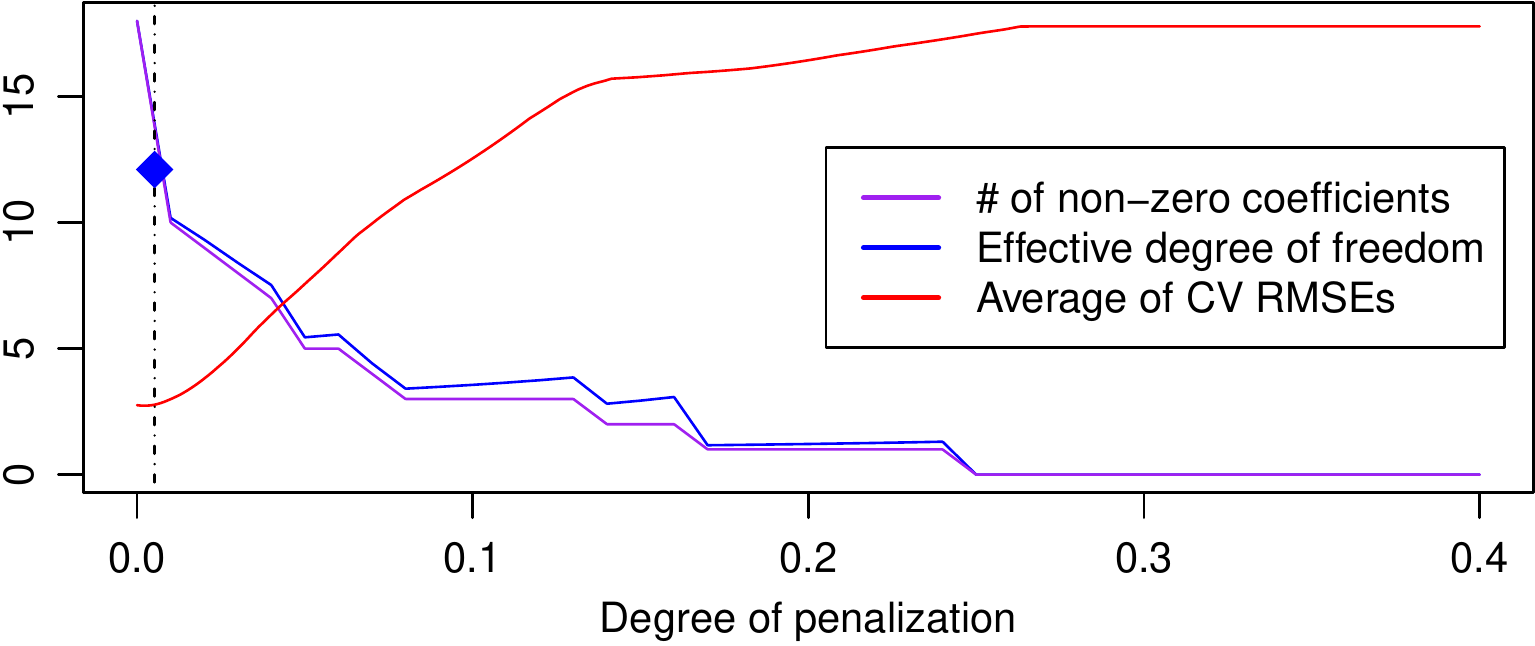}
\caption{\label{fig:edf}Effective degree of freedom with LAAD
penalization}
\end{figure}

Note that not only the choice of tuning parameters, but we also
{might} need to consider different attributes of covariates (for
example, binary, ordinal, discrete, or continuous) when we do {the}
variable selection {with} penalization. However, since the covariates
used in our empirical analysis are all binary variables, we can claim
that either direct use of \(L_1\) penalty or its transformation is
innocuous. For the variable selection on the covariates with diverse
attributes, see \citet{devriendt2018sparse},
{which is reduced to an application of usual LASSO penalty in our case with only binary variables as covariates}.

Once the parameters are estimated in each model, the corresponding
incremental development factor \(j^{th}\) lag for \(n^{th}\) line of
business can be also estimated as \(\exp (\hat{\zeta}^{(n)}_j )\), based
on the formulation of lognormal cross-classified model. Table
\ref{tab:6} summarizes the estimated results of incremental development
factors for the calibrated models. One can see that the unconstrained
model deviates from our expectations on the development pattern. For
example, in the case of General Liability, incremental development
factor of \(7^{th}\) lag is less than that of \(8^{th}\) lag. In the
case of Other Casualty, it is also shown that incremental development
factor of \(9^{th}\) lag is less than 1, which is not intuitive as well.
In contrast, it is observed that all constrained models and best subset
selection model are able to perform {the} variable selection
{and impose smoothness on the sequences of development factors.}

{However, as mentioned in subsection 2.1, use of any penalization method induces bias on the non-zero coefficients while it is not desirable to have huge bias on the non-zero coefficients, due to the imposed shrinkage for smoothing the development factors. More specifically, it is likely that year-to-year development factors are relatively large at the earlier stage of the development. If we use LASSO, the coefficients of early development stage might be underestimated because the soft-thresholding affects all the coefficients regardless of how much a coefficient deviates from zero. This causes bias on the estimated coefficients, which induces less conservative reserve estimates. In Table \mbox{\ref{tab:6}}, it is shown that both LAAD and LASSO regression have the same number of estimated non-zero coefficients but less shrinkage is applied on the coefficients from LAAD regression. Therefore, we think asymptotic unbiasedness of LAAD penalty supports achieving these two tasks simultaneously: smoothing the development factors and avoiding huge bias on the estimated non-zero coefficients.}

\begin{table}[!h]

\caption{\label{tab:unnamed-chunk-4}\label{tab:6}Summary of estimated incremental development factors}
\centering
\resizebox{\linewidth}{!}{
\begin{tabular}[t]{lrrrrrrrrrrrr}
\toprule
\multicolumn{1}{c}{ } & \multicolumn{6}{c}{General Liability} & \multicolumn{6}{c}{Other Casualty} \\
\cmidrule(l{3pt}r{3pt}){2-7} \cmidrule(l{3pt}r{3pt}){8-13}
  & Unconstrained & Best & LASSO & SCAD & MCP & LAAD & Unconstrained & Best & LASSO & SCAD & MCP & LAAD\\
\midrule
$\exp{(\zeta_2)}$ & 2.2022 & 2.3527 & 2.3545 & 2.3067 & 2.2923 & 2.3006 & 1.2975 & 1.3861 & 1.4115 & 1.3590 & 1.3505 & 1.3657\\
$\exp{(\zeta_3)}$ & 1.5681 & 1.5681 & 1.5253 & 1.5681 & 1.5681 & 1.5433 & 1.1052 & 1.1052 & 1.0948 & 1.1052 & 1.1052 & 1.0965\\
$\exp{(\zeta_4)}$ & 1.3108 & 1.3108 & 1.2723 & 1.3108 & 1.3108 & 1.2875 & 1.0792 & 1.0000 & 1.0679 & 1.0508 & 1.0674 & 1.0706\\
$\exp{(\zeta_5)}$ & 1.1723 & 1.1723 & 1.1349 & 1.1723 & 1.1723 & 1.1493 & 1.0352 & 1.0000 & 1.0231 & 1.0000 & 1.0000 & 1.0262\\
$\exp{(\zeta_6)}$ & 1.1569 & 1.1569 & 1.1164 & 1.1569 & 1.1569 & 1.1321 & 1.0298 & 1.0000 & 1.0162 & 1.0000 & 1.0000 & 1.0200\\
$\exp{(\zeta_7)}$ & 1.0465 & 1.0000 & 1.0053 & 1.0022 & 1.0030 & 1.0209 & 0.9959 & 1.0000 & 1.0000 & 1.0000 & 1.0000 & 1.0000\\
$\exp{(\zeta_8)}$ & 1.0512 & 1.0000 & 1.0033 & 1.0000 & 1.0000 & 1.0215 & 1.0024 & 1.0000 & 1.0000 & 1.0000 & 1.0000 & 1.0000\\
$\exp{(\zeta_9)}$ & 1.0106 & 1.0000 & 1.0000 & 1.0000 & 1.0000 & 1.0000 & 0.9929 & 1.0000 & 1.0000 & 1.0000 & 1.0000 & 1.0000\\
$\exp{(\zeta_{10})}$ & 1.0147 & 1.0000 & 1.0000 & 1.0000 & 1.0000 & 1.0000 & 0.9589 & 1.0000 & 1.0000 & 1.0000 & 1.0000 & 1.0000\\
\hhline{=============}
\end{tabular}}
\end{table}

\subsection{Model validation} \label{sub:val}

To validate the predictive models for loss development, calibrated using
the training set (upper loss triangles) \(\mathcal{D}_{1:10}\) defined
in (\ref{eq:trn}), we use cumulative (or incremental) payments of claims
for calendar year 2012 as a validation set, obtained from ACE Limited
2012 Global Loss Triangles. Note that these data points can be described
as
\(\mathcal{D}_{11} = \{Y{_{ij}^{(n)}}: 2 \leq i \leq 10 \text{ and } j = 12-i, n=1,2\}\)
and they are displayed as semi-diagonals in blue color in the triangles
of Tables 3 and 4.

Based on the estimated incremental development factor, one can predict
cumulative (or incremental) payments of claims for the subsequent
calendar year. For example, according to the model specification in
(\ref{eq:inc}), it is possible to predict the cumulative payment for
\(i^{th}\) accident year at \(j+1^{th}\) lag as of \(j^{th}\) lag as
follows: \[
\hat{Y}^{(n)}_{i,j+1} =  Y_{i,j} \times {\mathbb E}\left[Y^{(n)}_{i,j+1}/Y^{(n)}_{i,j}\right] = Y_{i,j} \times {\mathbb E}\left[\exp(C^{(n)}_{i,j})\right] =Y_{i,j} \times \exp\left(\zeta_{j+1}^{(n)}+{\frac{1}{2}\sigma^{2}} \right).
\]

Table \ref{tab:7} provides the predicted values of incremental claims
under each model {as point estimates}. According to the table, we can
see that in case of Other Casualty line, LAAD model is the best for
prediction of total unpaid claims for next calendar year.
{It is also shown that the Unconstrained model fails to perform the variable selection appropriately on the later development factors so that it severely underestimates incremental claims of older accidental years.}
In case of General Liability line, Best / SCAD / MCP models perform
marginally well for the prediction of total unpaid claims, while LASSO
model substantially underestimates the unpaid claims.

\begin{table}[!h]

\caption{\label{tab:unnamed-chunk-5}\label{tab:7}Summary of predicted incremental paid claims}
\centering
\resizebox{\linewidth}{!}{
\begin{tabular}[t]{lrrrrrrrrrrrrrr}
\toprule
\multicolumn{1}{c}{ } & \multicolumn{7}{c}{General Liability} & \multicolumn{7}{c}{Other Casualty} \\
\cmidrule(l{3pt}r{3pt}){2-8} \cmidrule(l{3pt}r{3pt}){9-15}
  & Unconstrained & Best & LASSO & SCAD & MCP & LAAD & Actual & Unconstrained & Best & LASSO & SCAD & MCP & LAAD & Actual\\
\midrule
AY=2004 & 14,647 & 4,986 & 5,301 & 4,803 & 4,746 & 4,932 & 10,460 & -11,930 & 2,751 & 2,925 & 2,650 & 2,619 & 2,722 & 1,702\\
AY=2005 & 12,610 & 5,250 & 5,582 & 5,058 & 4,998 & 5,194 & 18,236 & 275 & 2,944 & 3,130 & 2,836 & 2,803 & 2,912 & 4,655\\
AY=2006 & 57,778 & 7,520 & 11,266 & 7,244 & 7,159 & 28,505 & 42,420 & 4,514 & 3,386 & 3,600 & 3,262 & 3,224 & 3,350 & 1,098\\
AY=2007 & 42,162 & 5,964 & 10,479 & 7,449 & 8,013 & 22,093 & 49,790 & 1,575 & 3,213 & 3,417 & 3,096 & 3,059 & 3,179 & 6,641\\
AY=2008 & 175,372 & 175,194 & 132,625 & 174,848 & 174,740 & 148,629 & 62,450 & 16,338 & 3,335 & 10,599 & 3,213 & 3,175 & 12,000 & 24,195\\
AY=2009 & 128,676 & 128,555 & 102,272 & 128,319 & 128,246 & 112,112 & 116,112 & 29,856 & 5,329 & 21,723 & 5,134 & 5,073 & 23,477 & 23,449\\
AY=2010 & 145,081 & 144,995 & 127,759 & 144,827 & 144,775 & 134,362 & 152,345 & 35,583 & 3,142 & 31,120 & 23,789 & 30,540 & 31,980 & 11,790\\
AY=2011 & 173,204 & 173,136 & 160,477 & 173,003 & 172,962 & 165,626 & 220,413 & 56,300 & 56,220 & 51,295 & 56,065 & 56,017 & 51,845 & 55,776\\
AY=2012 & 165,965 & 186,556 & 186,954 & 180,158 & 178,152 & 179,383 & 203,434 & 139,542 & 179,970 & 191,861 & 167,408 & 163,468 & 170,580 & 165,383\\
\hline
Total & 915,495 & 832,154 & 742,714 & 825,710 & 823,791 & 800,836 & 875,659 & 272,051 & 260,290 & 319,670 & 267,454 & 269,979 & 302,046 & 294,690\\
\hhline{===============}
\end{tabular}}
\end{table}

It is also possible to evaluate the performance of prediction based on
usual validation measures such as root mean squared error (RMSE) and
mean absolute error (MAE) defined as follows: \[
\begin{aligned}
\text{RMSE } =: \sqrt{\frac{1}{9}\sum_{i=2}^{10} (\hat{Y}^{(n)}_{i,12-i} - Y^{(n)}_{i,12-i})^2}, \quad \text{MAE }=: \frac{1}{9}\sum_{i=2}^{10} |\hat{Y}^{(n)}_{i,12-i} - Y^{(n)}_{i,12-i}|.
\end{aligned}
\] Table \ref{tab:8} shows us that LAAD model is the most preferred in
terms of prediction performance measured by RMSE and MAE in both lines
of business. One can see that LAAD model is the best in terms of
out-of-sample validation except for the case of RMSE of General
Liability line, in which LASSO model is the best followed by LAAD model.

\begin{table}[!h]

\caption{\label{tab:unnamed-chunk-6}\label{tab:8}Summary of validation measures}
\centering
\resizebox{\linewidth}{!}{
\begin{tabular}[t]{lrrrrrrrrrrrr}
\toprule
\multicolumn{1}{c}{ } & \multicolumn{6}{c}{General Liability} & \multicolumn{6}{c}{Other Casualty} \\
\cmidrule(l{3pt}r{3pt}){2-7} \cmidrule(l{3pt}r{3pt}){8-13}
  & Unconstrained & Best & LASSO & SCAD & MCP & LAAD & Unconstrained & Best & LASSO & SCAD & MCP & LAAD\\
\midrule
RMSE & 43381.92 & 45677.11 & 36949.02 & 45782.23 & 45827.07 & 37279.38 & 13246.63 & 12032.92 & 10915.02 & 10248.55 & 11335.33 & 8299.45\\
MAE & 27803.04 & 32653.29 & 30366.00 & 33240.06 & 33413.07 & 27464.62 & 10101.76 & 8231.78 & 7903.88 & 6898.50 & 7642.07 & 5557.39\\
\hhline{=============}
\end{tabular}}
\end{table}

{In actuarial practice, it is natural to consider possible ranges of reserve estimates in order to account for random deviation due to model, parameter, and stochastic errors. Despite the prevalent use of standard errors and $t$-test based on the asymptotic properties of maximum likelihood estimates, computation of standard errors in the penalization models has been controversial. There is generally no consensus or agreement on a statistically valid method to calculate standard errors for penalized regression \mbox{\citep{kyung2010penalized}}.  For our purpose, we incorporate possible random deviation of predicted incremental claims using bootstrap methods as in \mbox{\citet{shi2011depreserve}} and \mbox{\citet{gao2018bayesian}}, which can consider both parameter and stochastic errors simultaneously. From Table \mbox{\ref{tab:9}} and Figure \mbox{\ref{fig:5}}, one can see that simulated unpaid claims under each model tend to be centered around the point estimates of total unpaid claims given in Table \mbox{\ref{tab:7}}. All six models have intervals covering the actual unpaid claims. Details of simulation scheme with bootstrap is provided in Appendix C.}

\begin{table}[!h]

\caption{\label{tab:unnamed-chunk-7}\label{tab:9}Interval estimates of predicted incremental claims via bootstrap}
\centering
\begin{tabular}[t]{lrrrrrr}
\toprule
\multicolumn{1}{c}{ } & \multicolumn{3}{c}{General Liability} & \multicolumn{3}{c}{Other Casualty} \\
\cmidrule(l{3pt}r{3pt}){2-4} \cmidrule(l{3pt}r{3pt}){5-7}
  & Mean & 95\% L.I & 95\% U.I & Mean & 95\% L.I & 95\% U.I\\
\midrule
Unconstrained & 915,495 & 658,772 & 1,209,463 & 272,051 & 135,370 & 433,019\\
Best & 742,714 & 574,713 & 1,029,077 & 319,670 & 174,127 & 382,608\\
LASSO & 832,154 & 501,542 & 899,414 & 260,290 & 266,570 & 482,086\\
SCAD & 825,710 & 622,338 & 1,069,123 & 267,454 & 170,353 & 408,123\\
MCP & 823,791 & 605,001 & 1,064,936 & 269,979 & 163,782 & 409,046\\
LAAD & 800,836 & 567,726 & 987,500 & 302,046 & 232,046 & 454,347\\
\hline
Actual & 875,659 &  &  & 294,690 &  & \\
\hhline{=======}
\end{tabular}
\end{table}

\begin{figure}
\centering
\includegraphics{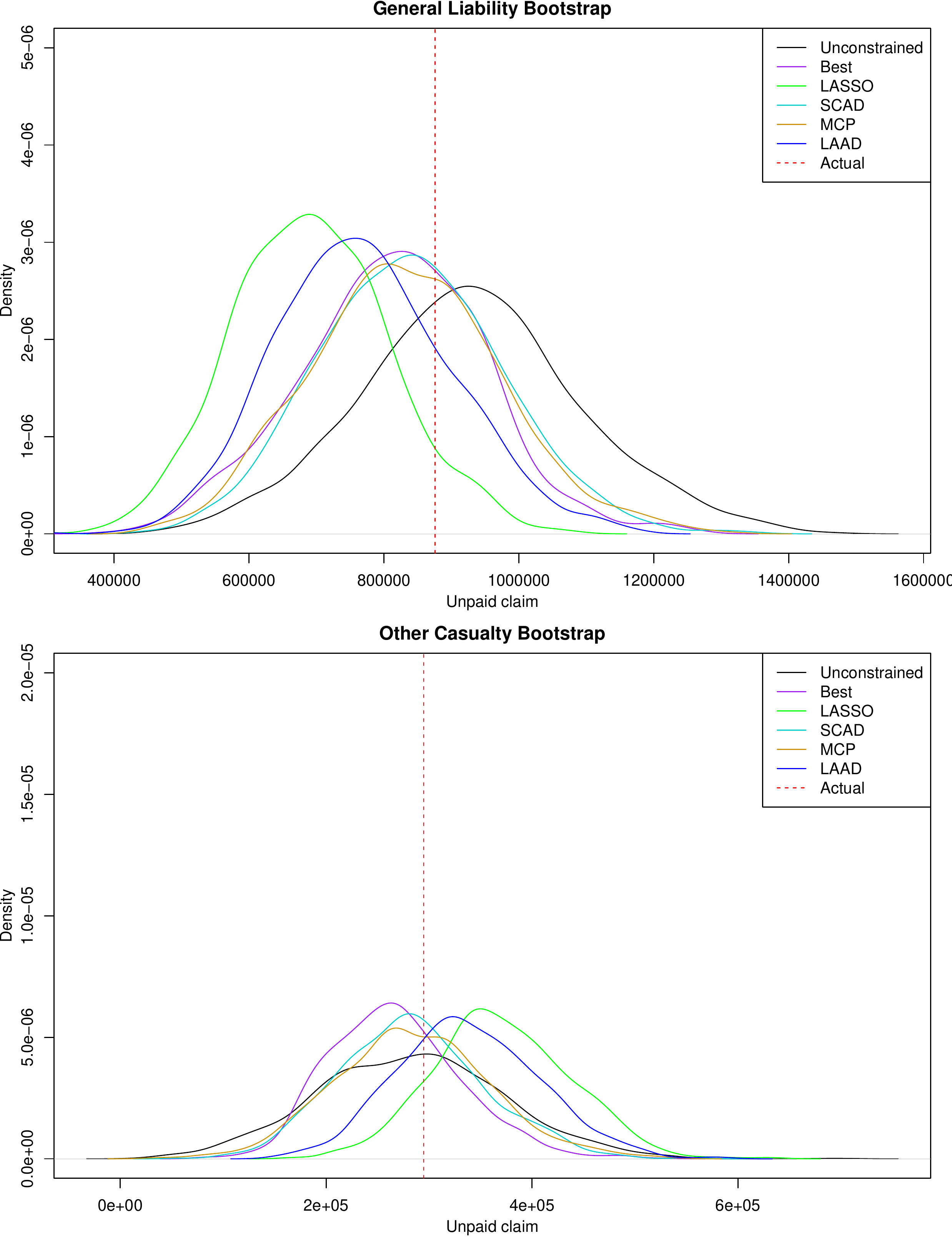}
\caption{\label{fig:5}Predictive density of incremental reported losses
for each model via Bootstrap}
\end{figure}

{Although we focused on the possible application of non-convex penalization in aggregate reserving, we may not preclude use of LAAD penalization to other possible applications in actuarial science and insurance. For the interested readers, please see Appendix D for additional real data analysis, which examines a possible use of non-convex LAAD penalization in insurance ratemaking.}

\section{Concluding remarks} \label{sec:conclude}

In this paper, we introduce use of LAAD penalty {to obtain} stable
estimation of loss development factors. It is also shown that the
proposed penalization method has some desirable properties such as
{the} variable selection with reversion to the true regression
coefficients, analytic solution for the univariate case, and an
optimization algorithm for the multivariate case, which converges under
modest condition {within a coordinate descent algorithm}. The novelty
{and performance} of the proposed method are also shown {using a}
simulation study. In {this} study, {the} use of LAAD regression
outperforms other methods such as OLS {and} other non-convex
penalization in terms of better prediction and the ability to capture
the correct level of model sparsity. {Furthermore,} according to the
results of the empirical {application}, {the} use of LAAD
{regression resulted in a} reasonable loss development pattern with
modest regularization and better prediction of unpaid claims for
{the subsequent} calendar year.
{The results of the use of other non-convex regularization methods, however, are still within tolerance.}
For future research,
{one can extend the use of LAAD regression to more granular loss reserving models},
which would naturally incorporate much more covariates.

\newpage

\hypertarget{appendix-a.-proof-of-theorem-1}{%
\subsection*{Appendix A. Proof of Theorem
1}\label{appendix-a.-proof-of-theorem-1}}
\addcontentsline{toc}{subsection}{Appendix A. Proof of Theorem 1}

It is easy to see that \(\hat{\theta}\times z \geq 0\) so we can start
from the case that \(z\) is not a negative number. Then we have the
following: \[
\begin{aligned}
&\ell'(\theta|r,z)=(\theta-z)+\frac{r}{1+\theta}, \ \ell''(\theta|r,z)=1-\frac{r}{(1+\theta)^2}, \\
&\ell'(\theta^*)=0 \Leftarrow \theta^* = \frac{z-1}{2} + \frac{\sqrt{(z-1)^2+4z-4r}}{2}
\end{aligned}
\] Note that if \(z=r=1\), then
\(\ell''(\theta)=(\theta^2+2\theta)/(1+\theta)^2 > 0\) for \(\theta>0\)
and \(\theta^*=0\). Thus, \(\hat{\theta}=0\).

Case 1) \(z \geq r\)

Since \(\hat{\theta}\) should be non-negative, we just need to consider
\(\theta^* = \frac{z-1}{2} + \frac{\sqrt{(z-1)^2+4z-4r}}{2}\). If
\(z \leq 1\), then \[
\begin{aligned}
&4(1+\theta^*)^2 \geq (z+1+|z-1|)^2 = 2^2 > 4r \ \Rightarrow \ \therefore \ l''(\theta^*|r,z)>0.
\end{aligned}
\] If \(z>1\), then we have \[
\begin{aligned}
&4(1+\theta^*)^2 \geq (z+1+|z-1|)^2 = 4z^2 > 4z \geq 4r \ \Rightarrow \ \therefore \ l''(\theta^*|r,z)>0.
\end{aligned}
\] Thus, for both cases we have only one local minimum point
\(\theta^*\) for \(l'(\theta|r,z)\) and \(\theta^*\) is indeed, a global
minimum point so that
\(\hat{\theta}= \frac{z-1}{2} + \frac{\sqrt{(z-1)^2+4z-4r}}{2}\).

Case 2) \(z < r, z < 1\)

In this case, \(\theta^* < 0\) so that \(l'(\theta|r,z) >0\)
\(\forall \theta \geq 0\). Therefore, \(l(\theta|r,z)\) strictly
increasing and \(\hat{\theta}= 0\).

Case 3) \(r \geq (\frac{z+1}{2})^2\)

In this case, \(\theta^* \notin \mathbb{R}\). Moreover,
\((\frac{z+1}{2})^2 \geq z\), \(l'(0|r,z) =r-z \geq 0\) and
\(l'(\theta|r,z) >0\) \(\forall \theta > 0\). Therefore,
\(\hat{\theta}= 0\).

Case 4) \(1 \leq z < r < (\frac{z+1}{2})^2\)

Here, let \(\theta^{*}=\frac{z-1}{2} + \frac{\sqrt{(z-1)^2+4z-4r}}{2}\)
and \(\theta^{'}=\frac{z-1}{2} - \frac{\sqrt{(z-1)^2+4z-4r}}{2}\). Now,
let us show that \(\theta^{*}\) is the local minimum of
\(\ell(\theta|r,z)\) - which only requires to show that
\(\ell''(\theta^{*}|r,z)>0\). Again, it suffices to show that
\(4(1+\theta^{*})^2 > 4r\) as follows: \[
\begin{aligned}
&4(1+\theta^{*})^2 = (z+1+\sqrt{(z+1)^2-4r})^2 > (z+1)^2+(z+1)^2- 4r > 4r\ \Rightarrow \ \therefore \ \ell''(\theta^{*}|r,z)>0.
\end{aligned}
\] Therefore, \(\theta^{*}\) is a local minimum of \(\ell(\theta|r,z)\)
and \(\hat{\theta}\) would be either \(\theta^{*}\) or \(0\). So in this
case, we have to compute
\(\Delta(z|r) = \ell(\theta^{*}|r,z)-\ell(0|r,z)\) and \[
\hat{\theta} =  \begin{cases}
                   \theta^{*} \ , \quad \text{if } \Delta(z|r) < 0, \\
                  0   \quad,      \quad \text{if } \Delta(z|r) > 0
                \end{cases}
\] Note that for fixed \(r\), \[
\begin{aligned}
\Delta(z|r)  &= \frac{1}{2}(\theta^{*})^2-\theta^{*}z+r\log(1+\theta^{*}), \\
\Delta'(z|r) &= (\theta^{*}-z+\frac{r}{1+\theta^{*}})\frac{\partial \theta^{*}}{\partial z} -\theta^{*} \\
             &= -\theta^{*} \quad (\because \ l(\theta^{*}|r,z)=\theta^{*}-z+\frac{r}{1+\theta^{*}}=0)
\end{aligned}
\] Thus, \(\Delta(z|r)\) is strictly decreasing with respect to \(z\)
and \[
\begin{aligned}
\Delta(z|r) &= \frac{1}{2}(\theta^{*})^2-\theta^{*}z+r\log(1+\theta^{*}) \\
       &= \frac{1}{2}\left(\frac{z-1}{2} + \frac{\sqrt{(z-1)^2+4z-4r}}{2}\right)^2 -\left(\frac{z-1}{2}  \frac{\sqrt{(z-1)^2+4z-4r}}{2}\right)z \\
       & \quad + r \log\left(\frac{z+1}{2} + \frac{\sqrt{(z-1)^2+4z-4r}}{2}\right) = 0
\end{aligned}
\] has unique solution because
\(\Delta(z|r)<0 \Leftrightarrow \hat{\theta}=\theta^{*}\) if \(z=r\) and
\(\Delta(z|r)>0 \Leftrightarrow \hat{\theta}=0\) if \(z=2\sqrt{r}-1\).
Hence \[
\hat{\theta} = \left(\frac{z-1}{2} + \frac{\sqrt{(z-1)^2+4z-4r}}{2}\right)(\mathbbm{1}_{ \{z\geq z^*(r) \}}).
\] where \(z^*(r)\) is the unique solution of \(\Delta(z|r)=0\) for
given \(r\). See Figure \ref{fig:7}.

\begin{figure}
\centering
\includegraphics{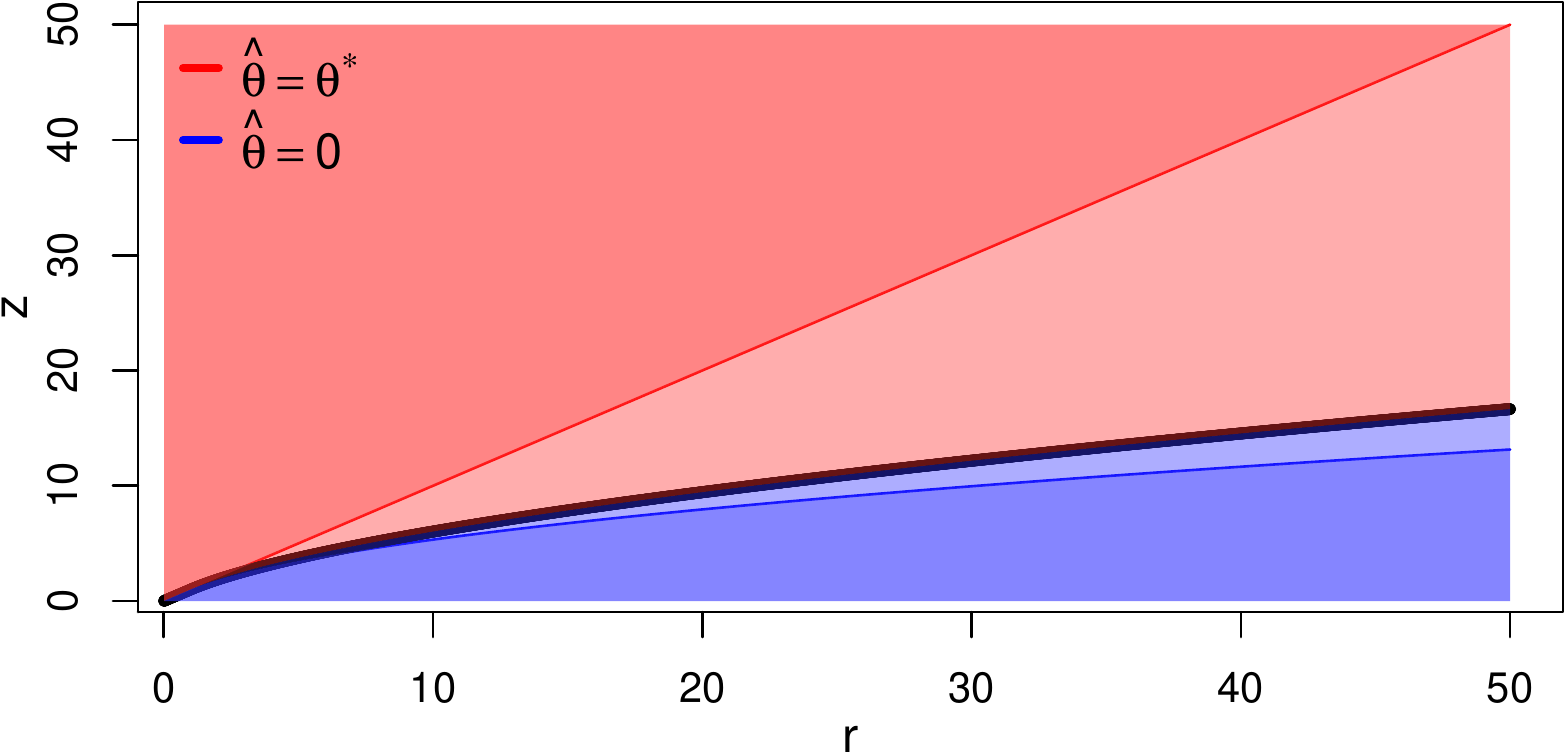}
\caption{\label{fig:7}Distribution of optimizer along with r and z}
\end{figure}

Once we get a result for \(z \geq 0\), we can use the same approach to
\(l(\theta|r,-z)\) when \(z<0\).

\hypertarget{appendix-b.-proof-of-lemma-1}{%
\subsection*{Appendix B. Proof of Lemma
1}\label{appendix-b.-proof-of-lemma-1}}
\addcontentsline{toc}{subsection}{Appendix B. Proof of Lemma 1}

Suppose \(\beta_j\) is fixed as \(w_j\) for all
\(j=1,\ldots, k-1,k+1, \ldots, p\). Then we can observe that \[
\begin{aligned}
\ell_k(\theta) &= \frac{1}{2} \bigg|\bigg| (y-\sum_{j\ne k}X_j w_j)-X_k\theta \bigg|\bigg|^2 + r \log(1+|\theta|) + r\sum_{j \ne k} \log(1+|w_j|)  \\
              &=\frac{1}{2} || t_k -X_k\theta||^2+r \log(1+|\theta|) + r\sum_{j \ne k} \log(1+|w_j|)  \\
              &=\frac{1}{2} (\theta'X'_kX_k\theta -2\theta X_k't_k+t_k't_k) + r \log(1+|\theta|) + r\sum_{j \ne k} \log(1+|w_j|)  \\
              &=\frac{1}{2} (\theta-z_k)^2 + r \log(1+|\theta|) + C_k  \\
\end{aligned}
\] where
\(C_k =\frac{1}{2}(t_k't_k-t_k'X_k X_k't_k) + r\sum_{j \ne k} \log(1+|w_j|)\)
and \(z_k=X_k't_k\).\\

As usual, we can start from the case that \(z_k\geq 0\). First, one can
easily check that \(\ell_k(\theta)\) is a decreasing function of
\(\theta\) where \(\theta \leq0\) and \(z_k\geq 0\). When
\(\theta > 0\), according to the arguments in the proof of Theorem 1,
\(l_k(\theta)\) is strictly decreasing when \(\theta \in (0,\theta^*]\)
and strictly increasing when \(\theta \in [\theta^*,\infty)\) if \(r\)
and \(z_k\) belong to Case 1, Case 2, and Case 3. Note that if
\(r\leq 1\), then we may exclude Case 4. Therefore, \(l_k(\theta)\) is
hemivariate and quasi-convex if \(z_k\geq 0\) and also if \(z_k < 0\)
because of the symmetry of penalty term.

\hypertarget{appendix-c.-bootstrap-for-predictive-distribution-of-unpaid-loss}{%
\subsection*{Appendix C. Bootstrap for predictive distribution of unpaid
loss}\label{appendix-c.-bootstrap-for-predictive-distribution-of-unpaid-loss}}
\addcontentsline{toc}{subsection}{Appendix C. Bootstrap for predictive
distribution of unpaid loss}

\begin{itemize}
  \item[(1)] Simulate $\{\hat{c}^{(n)}_{ij[s]} |\, i=1,\ldots, I,\, j=1,\ldots J-I+1, \,n=1,2\}$ where $\log \hat{c}^{(n)}_{ij[s]} \sim \mathcal{N}(\hat{\eta}_{ij}^{(n)}, \hat{\sigma}^{2})$.
  \item[(2)] Using the simulated values of $\hat{c}^{(n)}_{ij[s]}$ in step (1), estimate bootstrap replication of the parameters $\{(\hat{\eta}_{ij[s]}^{(n)}, \hat{\sigma}_{[s]}^{2})|\,i=1,\ldots, I,\, j=1,\ldots J-I+1, \,n=1,2 \}$.
  \item[(3)] Based on $(\hat{\eta}_{ij[s]}^{(n)}, \hat{\sigma}_{[s]}^{2})$, predict the unpaid loss ${L}^{(n)}$ for the next year which is given as follows:
$$
\hat{L}^{(n)}_{[s]} =\sum_{i=2}^{10} \left(\exp{(\hat{\eta}_{i,12-i[s]}^{(n)}+\hat{\sigma}_{[s]}^{2}/2)}-1\right) y^{(n)}_{i,11-i} 
$$  
Note that the values of $y^{(n)}_{i,11-i}$ for $i=2, \ldots, 10$ and $n=1,2$ are already known in advance from the training set.
  \item[(4)] Repeat steps (1), (2), and (3) for $s=1, \ldots, S$ to obtain the predictive distribution and standard error of ${L}^{(n)}$.
\end{itemize}

\pagebreak

\hypertarget{appendix-d.-application-of-laad-penalty-in-insurance-ratemaking}{%
\subsection*{Appendix D. Application of LAAD penalty in insurance
ratemaking}\label{appendix-d.-application-of-laad-penalty-in-insurance-ratemaking}}
\addcontentsline{toc}{subsection}{Appendix D. Application of LAAD
penalty in insurance ratemaking}

To illustrate possible use of LAAD penalized regression for insurance
ratemaking, we used the LGPIF (Wisconsin Local Government Property
Insurance Fund) data, which consists of policy characteristics and
claims information on multiple lines of business. Each observation is a
local government unit such as city, town, village, or county. For
simplicity, among the multiple types of claims, we extracted the
information from IM (inland marine) line of business only. We used 235
observations with positive claim amounts with corresponding policy
characteristics summarized in Table \ref{tab:LGPIF}.

\begin{table}[!ht]
\begin{center}
\caption{Observable policy characteristics used as covariates} \label{tab:LGPIF}
\resizebox{!}{3cm}{
\begin{tabular}{l|lrrr}
\hline \hline
Categorical & Description &  & \multicolumn{2}{c}{Proportions} \\
variables \\
\hline
TypeCity & Indicator for city entity:           & Y=1 & \multicolumn{2}{c}{24.26 \%} \\
TypeCounty & Indicator for county entity:       & Y=1 & \multicolumn{2}{c}{42.55 \%} \\
TypeMisc & Indicator for miscellaneous entity:  & Y=1 & \multicolumn{2}{c}{0.43 \%} \\
TypeSchool & Indicator for school entity:       & Y=1 & \multicolumn{2}{c}{5.96 \%} \\
TypeTown & Indicator for town entity:           & Y=1 & \multicolumn{2}{c}{8.08 \%} \\
TypeVillage & Indicator for village entity:     & Y=1 & \multicolumn{2}{c}{18.72 \%} \\
NoClaimCreditIM & No IM claim {in three consecutive prior years}:    & Y=1 & \multicolumn{2}{c}{37.02 \%} \\
\hline
 Continuous & & Minimum & Mean & Maximum \\
 variables \\
\hline
CoverageIM  & Log coverage amount of IM claim in mm  &  0.02 & 5.05
            & 46.75\\
lnDeductIM  & Log deductible amount for IM claim     &  6.215 & 6.751
            & 8.517\\
\hline \hline
\end{tabular}}
\end{center}
\end{table}

Since the original dataset is quite rich for data analysis, one can
consider several different aspects of ratemaking, such as the heavy-tail
behavior of claims, possible dependence between frequency and severity,
and serial dependence among the claims of the same policyholder when
observed over time. We do not consider the aformentioned topics to avoid
distraction, but we refer the interested readers to
\citet{frees2016multivariate}, \citet{lee2019dependent}, and
\citet{jeong2020bregman}.

The observed total positive claim amount, \(S_i\), can be described as
follows: \[
\ln S_{i} \sim {\mathcal{N}} \left(\mu_{i},\sigma^2 \right) \ \text{  and  } \mu_{i} = \mathbf{x}_i\beta, 
\] and six different competing models were examined for calibration.

\begin{itemize}

\item \textbf{Unconstrained model:} a model which minimizes the following objective function:
$$
 \sum_{i=1}^m   \left(\ln S_{i} - \mathbf{x}_i\beta\right)^2,
$$
\item \textbf{Best subset model:} a model which minimizes Bayesian information criterion (BIC) based on the estimated parameter values.

\item \textbf{LASSO / SCAD / MCP / LAAD constrained models:} models which minimize the following objective functions with $p_{\lambda}(\cdot)$ as defined in (\ref{eq:penalty}):
$$
 \sum_{i=1}^m   \left(\ln S_{i} - \mathbf{x}_i\beta\right)^2 +\left[ \sum_{k=1}^{p}  p_\lambda\left(\beta_{k} \right)  \right],
$$
\end{itemize}

As usual, note that for all constrained models, we do not impose penalty
constraint on \(\beta_{0}\), the intercept coefficient.

Table \ref{tab:11} summarizes the estimated coefficients of ratemaking
factors. One can see that originally we have six types of location
variables but we end up with three types of location variables (City,
County, and Others) after variable selection is performed using
regularization methods. One can also immediately deduce an intuitive
explanation to these results. For example, City and County are some of
the largest government entities. Coincidentally, the estimated
coefficients happen to be identical for Best, SCAD, and MCP models.

\begin{table}[!h]

\caption{\label{tab:unnamed-chunk-9}\label{tab:11}Summary of estimated ratemaking factors}
\centering
\begin{tabular}[t]{lrrrrrr}
\toprule
  & Unconstrained & Best & LASSO & SCAD & MCP & LAAD\\
\midrule
TypeCity & 1.1342 & 1.0352 & 0.6890 & 1.0352 & 1.0352 & 0.5655\\
TypeCounty & 0.7239 & 0.6125 & 0.3243 & 0.6125 & 0.6125 & 0.1851\\
TypeMisc & -0.5596 & - & - & - & - & -\\
TypeSchool & 0.3957 & - & - & - & - & -\\
TypeTown & 0.0397 & - & - & - & - & -\\
CoverageIM & 0.0497 & 0.0483 & 0.0460 & 0.0483 & 0.0483 & 0.0486\\
lnDeductIM & -0.0246 & - & - & - & - & -\\
NoClaimCreditIM & 0.0839 & - & - & - & - & -\\
\hhline{=======}
\end{tabular}
\end{table}

Table \ref{tab:12} shows that LAAD regression is the best performing
model in terms of out-of-sample validation measures, and followed by
LASSO regression. We conclude that regularization methods can also be
applied to insurance ratemaking problems so that one can effectively
reduce the dimension of ratemaking factors that can be used and still
have better prediction performance.

\begin{table}[!h]

\caption{\label{tab:unnamed-chunk-10}\label{tab:12}Summary of validation measures}
\centering
\begin{tabular}[t]{lrrrrrr}
\toprule
  & Unconstrained & Best & LASSO & SCAD & MCP & LAAD\\
\midrule
RMSE & 22039.11 & 23029.08 & 16895.38 & 23029.08 & 23029.08 & 15498.43\\
MAE & 19330.80 & 20393.52 & 15504.69 & 20393.52 & 20393.52 & 14279.56\\
\hhline{=======}
\end{tabular}
\end{table}

\bibliography{BayesianLASSO}

\end{document}